\newtheorem{theorem}{\indent Theorem}[section]
\newtheorem{lemma}{\indent Lemma}[section]
\newtheorem{proposition}{\indent Proposition}[section]
\newtheorem{corollary}[theorem]{\indent Corollary}
\def\red#1{{#1}}
\def\blue#1{{#1}}
\def\green#1{{#1}}
\def\blue#1{{#1}}
\xdef\old#1{}
\begin{document}

\title{Distributed anonymous function computation in information fusion
and multiagent systems\thanks{The authors are with the Laboratory for Information
and Decision Systems, Massachusetts Institute of Technology, Cambridge, MA, {\tt\small jm\_hend@mit.edu, alex\_o@mit.edu, jnt@mit.edu.}
This research
was supported by the National Science Foundation under grant
ECCS-0701623, and
postdoctoral fellowships from the F.R.S.-FNRS (Belgian Fund for
Scientific Research) and the B.A.E.F. (Belgian American Education
Foundation).
}}

\author{Julien M. Hendrickx, Alex Olshevsky, John N. Tsitsiklis}
\maketitle

%%%%%%%%%%%%%%%%%%%%%%%%%%%%%%%%%%%%%%%%%%%%%%%%%%%%%%%%%%%%%%%%%%%
%%%%%%%%%%%%%%%%%%%%%%%%%%%%%%%%%%%%%%%%%%%%%%%%%%%%%%%%%%%%%%%%%%%
% Abstract
%%%%%%%%%%%%%%%%%%%%%%%%%%%%%%%%%%%%%%%%%%%%%%%%%%%%%%%%%%%%%%%%%%%
%%%%%%%%%%%%%%%%%%%%%%%%%%%%%%%%%%%%%%%%%%%%%%%%%%%%%%%%%%%%%%%%%%%

\begin{abstract}
\noindent
We propose a model for \red{deterministic} distributed function computation by a
network of identical \green{and} anonymous nodes, \green{with} bounded computation
and storage \green{capabilities} that do not scale with the network size.
%This model is inspired by typical problems in information fusion
%and multi-agent systems. 
Our goal is to \green{characterize the class of functions
that can be computed within this model.}  In our main result, we exhibit a class of
non-computable functions, and prove that every function outside
this class can at least be approximated.
\blue{The problem of computing averages in a distributed manner plays a central role in our development.}
\end{abstract}

%%%%%%%%%%%%%%%%%%%%%%%%%%%%%%%%%%%%%%%%%%%%%%%%%%%%%%%%%%%%%%%%%%%
%%%%%%%%%%%%%%%%%%%%%%%%%%%%%%%%%%%%%%%%%%%%%%%%%%%%%%%%%%%%%%%%%%%
\section{Introduction}\label{s:intro}
%%%%%%%%%%%%%%%%%%%%%%%%%%%%%%%%%%%%%%%%%%%%%%%%%%%%%%%%%%%%%%%%%%%
\thispagestyle{empty}

The goal of many multi-agent systems,
\green{distributed} computation \green{algorithms} and
decentralized data fusion methods is to have a
set of \green{nodes compute} \blue{a} common value based on
\green{initial values or observations at each node. Towards this purpose,
the nodes,} which we will \green{sometimes refer to as agents,} perform some internal
computations and repeatedly communicate with each other. Let us
consider some examples.

\vspace{5pt}
\noindent {\bf (a) Quantized consensus:} Suppose \green{that} each agent begins
with an integer value $x_i(0) \green{\in \{0,\ldots,K\}}$. We would like \green{the agents to end up, at some later time, with values $y_i$ that are almost equal, i.e.,
$|y_i - y_j| \leq 1$, for all $i,j$, while preserving the sum of the values, i.e., $\sum_{i=1}^n x_i(0) = \sum_{\blue{i=1}}^n
y_i$.}
This is the so-called quantized \red{averaging} problem which has
received considerable attention recently; see
\cite{KBS07,FCFZ08,ACR07, KM08}. It may be viewed as the problem
of computing the function $(1/n) \sum_{i=1}^n x_i$, rounded
to the nearest integer.

\noindent {\bf (b) Distributed hypothesis testing:} Consider $n$
sensors \green{interested in deciding} between two hypotheses,
$H_0$ and $H_1$. Each sensor collects measurements and makes a
preliminary decision \red{$x_i\in\{0,1\}$} in favor of one of the hypotheses. The
sensors would like to
\green{make a final decision by majority vote, in which case they need to compute the indicator function of the event
$\sum_{i=1}^n x_i \geq n/2$,
in a distributed way.
Alternatively, in a weighted majority vote, they may be interested in
computing the indicator function of the event
$\sum_{i=1}^n x_i \geq 3n/4$.}
%A
%variation of this problem involves the possibility that some
%sensors do not make any preliminary decisions due to their
%inability to gather sufficiently reliable information.

%\noindent {\bf Coordinating on a direction in a ring:} Consider
%$n$ vehicles placed on a ring, each with some arbitrarily chosen
%velocity and direction of motion, which can take one of two values
%(clockwise or counterclockwise). We would like the $n$ vehicles to
%agree on a single direction of motion.

%A variation of this problem was considered in \cite{MBCF07},
%where, however, additional requirements on the vehicles were
%imposed which we do not consider here. The solution implemented in
%\cite{MBCF07} was semi-centralized in the sense that vehicles had
%unique identifiers, and the final direction of most was the
%direction taken by the node with the maximum id. We wonder if the
%direction-coordination problem may be solved in a completely
%decentralized way. Furthermore, we would like the final direction
%of motion to correspond to the initial direction of the majority
%of the vehicles: if say, 90\% of the vehicles are moving
%counterclockwise, it seems that it is the other 10\% that should
%have to turn around.

%If we define $x_i$ to be $1$ in the case the $i$'th vehicle is
%initially oriented clockwise, and $0$ if it is oriented
%counterclockwise, then coordinating on a direction involves
%distributed computation of the function $1_{\sum_{i=1}^n x_i \geq
%n/2}$.

\noindent {\bf (c) Solitude verification:} \green{This is the problem of}
verifying that at most \green{one} node in the network has a given state. \green{This problem is of interest if we want to avoid
simultaneous transmissions over a common channel
\cite{GFL83}, or if we want to maintain a single leader
(as in motion
coordination  --- see for example \cite{JLM03}) Given \blue{$K$}
possible states, \green{so that} $x_i \in \{1,\ldots,\blue{K}\}$, \green{solitude verification} is
equivalent to \green{the problem of} computing the binary function which is equal to 1 if and only if $|\{i: x_i=1\}| =1$.}
\vspace{5pt}

\red{There are numerous methods that have been proposed for solving problems such as the above. (See for example the vast and growing literature on consensus and averaging methods.) Oftentimes, different algorithms involve different computational capabilities on the part of the agents, which makes it hard to talk about ``the best'' algorithm. At the same time, simple algorithms (such as setting up a spanning tree and aggregate information by progressive summations over the tree) are often ``disqualified'' \blue{because they require} too much coordination or global information. One then realizes that a sound discussion of such issues requires the specification of a precise model of computation, followed by a systematic analysis of fundamental limitations under  any given model. This is precisely the objective of this paper: \blue{to} propose a particular model, and to characterize the class of computable functions under this model.}

%\bigskip
%
%\noindent {\bf Averaging with don't-knows:} Consider the
%distributed hypothesis testing problem, except this time some
%nodes will have no opinion at all about the hypothesis. For
%example, one can imagine that only nodes on the periphery of the
%network have managed to obtain any measurements of value, and that
%these nodes are on opposite sides of the network.
%
%We may suppose that each node has the initial alphabet
%$\{0,1,*\}$, where $*$ stands for ``don't know,'' and we would
%like to compute the function $1_{|\{i ~|~ x_i=0\}| \geq |\{ i~|~
%x_i=1\}|}$.

%\bigskip

\red{Our model provides an abstraction for the most common requirements for distributed algorithms in the sensor network literature. It is somewhat special because (i) it does not allow for randomization; (ii) it does not address the case of time-varying interconnection graphs; such extensions are left for future research. Qualitatively speaking, our model includes the following features.}

\noindent \textbf{\red{Identical} agents:} Any two agents with the same number of neighbors must run the same algorithm.

\smallskip

\noindent\textbf{Anonymity:}
\green{An agent can distinguish its neighbors using its own, private, local identifiers. However, agents do not have global identifiers.}

\smallskip

\noindent\textbf{Absence of global information:} Agents have no
\green{global information, and do not even have an upper bound on the total number of nodes. Accordingly, the algorithm that each agent is running is independent of the
network size and topology.}\smallskip

\noindent\textbf{Convergence:} Agents hold
an estimated output, and this \green{estimate} must converge to a
desired value which is \green{generally} a function of all agents'
\green{initial} observations or \green{values}. \green{In particular, for the case of discrete outputs,}
all agents must eventually \green{settle on} the desired value.
\green{On the other hand, the agents do not need to be aware
of such termination, which is anyway impossible in}
the absence of \green{any} global information \cite{ASW88}.

%%%%%%%%%%%%%%%%%%%%%%%%%%%%%%%%%%%%%%%%%%%%%%%%%%%%%%%%%%%%%%%%%%%
%%%%%%%%%%%%%%%%%%%%%%%%%%%%%%%%%%%%%%%%%%%%%%%%%%%%%%%%%%%%%%%%%%%
\subsection{Goal and Contribution}
%%%%%%%%%%%%%%%%%%%%%%%%%%%%%%%%%%%%%%%%%%%%%%%%%%%%%%%%%%%%%%%%%%%

We \green{provide} in this paper a general model of decentralized
anonymous computation \green{with the above described features,} and
characterize the type of functions of the initial \green{values} that
can be computed. To keep our model simple, we only consider
deterministic \green{and} synchronized agents exchanging messages on a fixed \green{bi-directional}
network, with no time-delays or \green{unreliable transmissions.} Agents are modelled
as finite automata, \green{so that their individual capabilities remain bounded as the number of agents increases.}

We prove that if a function is
computable under \green{our model,} then its value
only depends on the \green{frequencies of the different possible initial values.}
For example, if the initial
\green{values} $x_i$ only take values $0$ and $1$, a computable
function necessarily only depends on $p_0:= |
\{i:x_i=0\}|/n$ and $p_1:=|\{i:x_i=1\}|/n$. In particular,
determining the number of nodes, or \green{whether} at least two nodes have an
initial \green{value} \blue{of} $1$ is \green{impossible.}

Conversely, we prove that if a function only
depends on the \green{frequencies of the different possible initial values (and is measurable)}, then the
function can at least be approximated \green{with any given}  precision, except possibly on a set of frequency vectors of  arbitrarily small volume.
Moreover, if the dependence on \green{these frequencies} can be expressed by
a combination of linear inequalities with rational coefficients,
then the function is computable exactly. In particular,
the functions involved in the quantized consensus and distributed hypothesis testing examples are
computable,
whereas the function involved in solitude verification is not.
Similarly,
statistical measures such as the standard deviation and the
kurtosis can be approximated with arbitrary precision.

Finally,
we show that with infinite memory, \red{the frequencies of the different values} (i.e., $p_0,p_1$ in the binary case) are computable.
%%Make a comment about (almost) exact separation

%%%%%%%%%%%%%%%%%%%%%%%%%%%%%%%%%%%%%%%%%%%%%%%%%%%%%%%%%%%%%%%%%%%
%%%%%%%%%%%%%%%%%%%%%%%%%%%%%%%%%%%%%%%%%%%%%%%%%%%%%%%%%%%%%%%%%%%
\subsection{Overview of previous work}
%%%%%%%%%%%%%%%%%%%%%%%%%%%%%%%%%%%%%%%%%%%%%%%%%%%%%%%%%%%%%%%%%%%

There \green{is a large literature} on distributed function computation in
related models of computation. A common model in the distributed
computing literature involves the requirement that \green{all} processes
terminate when the \green{desired output is produced.} A consequence of the
termination requirement is that nodes typically need to know the
network size $n$ (or an upper bound on $n$) to compute any non-constant functions. We refer
the reader to \cite{A80, ASW88, YK88, KKB90, MW93} for \green{some}
fundamental results in this setting, \green{and to \cite{FR03} for} an excellent summary of
known results.

Similarly, the biologically-inspired ``population algorithm" model of distributed computation
has some features in common with our model, namely finite-size agents and
lack of a termination condition;
\green{see \cite{AR07} for a very nice summary of known results.}
However, \green{these models involve a somewhat different type of agent interactions from the ones we consider.}

We \green{note} that the impossibility of computing $p_1$ without any memory was shown
in \cite{LB95}. Some experimental memoryless algorithms
were proposed \green{in the physics literature} \cite{GKL78}.
Randomized algorithms for computing particular functions were
investigated in \cite{KBS07, BTV08}. We also point the reader \green{to} \blue{the}
literature on ``quantized averaging,'' which often tends
to involve similar themes \cite{FCFZ08, KM08, ACR07, CB08}.

Several papers quantified the performance of simple
heuristics for computing specific functions, typically in
randomized settings. We refer the reader \blue{to} \cite{HP01} and
\cite{PVV08}, which studied simple heuristics \green{for computing} the
majority function. A deterministic algorithm for computing the
majority function (and some more generalized functions) was
proposed in \cite{LBRA04}.

Semi-centralized versions of the problem, in which the nodes
ultimately transmit to a fusion center, have often be considered
in the literature, e.g., for distributed statistical
inference  \cite{MK08} \green{or detection}
\cite{KLM08}.
The papers \cite{GK05},
 \cite{KKK05}, and \cite{YSG07} consider the complexity of computing a function and
communicating \green{its value} to a sink node. We refer the
reader to the references therein for an overview of \green{existing}
results in such semi-centralized settings.

Our results differ from previous works in several key respects:
(i) Our model, which involves totally decentralized computation,
deterministic algorithms, and constraints on memory and computation
resources at the nodes, but does not require the nodes to know
when the computation is over, \green{does not seem to have been studied before.}
(ii) Our focus is on identifying computable \green{and} non-computable
functions \blue{under our model,} and we achieve a nearly tight
separation.

% as evidence by the contrast between Corollary
%\ref{measurable} and Theorem \ref{unbounded-bound}.

%%%%%%%%%%%%%%%%%%%%%%%%%%%%%%%%%%%%%%%%%%%%%%%%%%%%%%%%%%%%%%%%%%%
%%%%%%%%%%%%%%%%%%%%%%%%%%%%%%%%%%%%%%%%%%%%%%%%%%%%%%%%%%%%%%%%%%%
\section{Formal description of the model}
%%%%%%%%%%%%%%%%%%%%%%%%%%%%%%%%%%%%%%%%%%%%%%%%%%%%%%%%%%%%%%%%%%%

The system consists of (i) a communication graph $G=(V,E)$, which
is bidirectional (\green{i.e.,} if $(i,j) \in E$, then $(j,i) \in E$);
(ii) a \emph{port labeling} whereby edges outgoing  from node $i$
are labeled by \emph{port numbers} in the set $\{1,2,\ldots,{\rm degree}(i)\}$; (iii)
a family of finite automata $(A_d)_{d = 1,2,3,\ldots}$.
\green{(The automaton $A_d$ is meant to describe the behavior of a node with degree $d$.)}

The state of the automaton $A_d$ is \green{a} tuple $(x,z,y, m_1,
\ldots, m_d)$;  we will call $x \in X\blue{=\{0,1,\ldots,K\}}$ the initial \green{value}, $z\in
\red{Z_d}$ the internal memory state,
 $y\in Y$ the output or estimated answer, and $m_1, \ldots, m_d \in M$ the
 messages. The sets $X,Y,Z_d, M$ are \red{assumed
finite, unless there is a statement to the contrary.
 Furthermore, we assume that the number of bits that can be stored at a node is proportional to the node's degree; that is, $\log |Z_d| \leq Cd$, for some absolute constant $C$. (Clearly, at least this much memory would be needed to be able to store the messages received at the previous time step.)}  We will also assume that $\emptyset$ is an element of
\blue{the above defined sets $Y$, $Z_d$, and $M$.} The transition law $A_d$ \green{maps} $X \times \blue{Z_d} \times Y \times M^d$
to $X \times \blue{Z_d} \times Y \times
M^d$: $\left[x,z,y;(m_1,\dots,m_d)\right]$ is \green{mapped} to $\left[x, z',
y';(m'_1,\dots, m'_d)\right].
$ In words, the automaton creates a new memory state, output, and \green{(outgoing)} messages at
each iteration, but does not change the initial \blue{value.}

We construct a dynamical system out of the above elements as
follows. \green{Let $d(i)$ be the degree of node $i$.} Node $i$ \green{begins with an initial value
$x_i\in X$;} it implements the automaton $A_{d(i)}$, \green{initialized} with $x=x_i$, and with
$z=y=m_1=\ldots=m_d = \emptyset$.
\green{We use $S_i(t)=
[ \blue{x_i},y_i(t),z_i(t),m_{i,1}(t), \ldots, m_{i, d(i)}(t)]$
to denote the}
state
of the automaton implemented by agent $i$ at round $t$.
Let $j_1, \ldots,
j_{d(i)}$ be an enumeration of the neighbors of $i$, and let $p_k$
be the port number of the link $(j_k,i)$.   The evolution of the
system is then described by  %\begin{small}
%$$\left[
%z_i(t+1),y_i(t+1);
%\left((m_{ij_{i_1}}(t+1),\dots,m_{ij_{i_{d_i}}}(t+1)\right)
%\right] =
%A_{d_i}\left[z_i(t),x_i;\left((m_{j_{i_1}i}(t),\dots,m_{j_{i_{d_i}}i}(t)\right)
%\right],$$
%\begin{eqnarray*}
$$\left[x_i,
z_i(t+1),y_i(t+1); m_{i,1}(t+1), \ldots, m_{i, d(i)}(\red{t+1}) \right]  =
A_{d(i) }\left[x, z_i(t), y_i(t) ; m_{\blue{j_1, p_1}}(t),
\ldots, m_{j_d, p_d}(t) \right].
$$  %\end{small}
In words, the
messages ``sent'' by the neighbors of $i$ into \green{ports
leading} to $i$ are used to transition to a new state and
create new messages that  $i$ ``sends'' to its neighbors at the next
round. We say that $y^*\in Y$ is the \emph{\green{final output}} of
this dynamical system if there is a time $t'$ such that $y_i(t) =
y^*$ for every $i$ and $t\geq t'$.

Consider now a family of functions $(f_n)_{n\in
\mathcal{N}}:X^n\to Y$. We say that such a
family is \emph{computable} if there exists a family of automata
$(A_d)_{d=1,2,\ldots}$ such that \blue{for any $n$,} for any connected graph
$G=(V,E)$ \blue{with $n$ nodes,} any port labelling, and any set of initial conditions
$x_1,\dots,x_{\blue{n}}$, the \green{final output} of the above system is
always $f_{\blue{n}}(x_1,\dots,x_n)$.

In some results, we will also refer to function families $(f_n)_{n
\in \mathcal{N}}$ {\em computable with infinite memory},  by which
we mean that the internal memory \blue{sets $Z_d$} and output set $Y$ are
countable, the rest of the model being unaffected.

We study in the sequel the general function computation problem:
What \green{families} of functions are computable, and
how can we design the automata $A_d$ to compute them?

%\input{formal_description}
%%%%%%%%%%%%%%%%%%%%%%%%%%%%%%%%%%%%%%%%%%%%%%%%%%%%%%%%%%%%%%%%%%%%%%%
%%%%%%%%%%%%%%%%%%%%%%%%%%%%%%%%%%%%%%%%%%%%%%%%%%%%%%%%%%%%%%%%%%%%%%%
%%%%%%%%%%%%%%%%%%%%%%%%%%%%%%%%%%%%%%%%%%%%%%%%%%%%%%%%%%%%%%%%%%%%%%%
\section{Necessary condition for computability}\label{sec:nec_conditions}
%%%%%%%%%%%%%%%%%%%%%%%%%%%%%%%%%%%%%%%%%%%%%%%%%%%%%%%%%%%%%%%%%%%%%%%
%%%%%%%%%%%%%%%%%%%%%%%%%%%%%%%%%%%%%%%%%%%%%%%%%%%%%%%%%%%%%%%%%%%%%%%

Let us first state the following lemma which can easily be proved
by induction on time.

\begin{lemma} \label{switching}
Suppose that $G=(\{1,\ldots,n\},E)$ and $G'=(\{1,\ldots,n\},E')$
are isomorphic, that is, there exists a permutation $\pi$ such
that $(i,j) \in E$ if and only if $(\pi(i),\pi(j)) \in E'$.
Further, suppose that the \green{port label at node $i$ for the edge leading to} $j$ in $G$ is the
same as the \green{port label at node $\pi(i)$ for the edge leading to} $\pi(j)$ in $G'$.  Then, \green{the state
$S_i(t)$ resulting} from \blue{the initial values $x_1,\ldots,x_n$} with
the graph $G$ is the same as \blue{the state} $S_{\pi(i)}(t)$
\blue{resulting from the initial values $x_{\pi(1)},\ldots,x_{\pi(n)}$}
 with the graph $G'$.
\end{lemma}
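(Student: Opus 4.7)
The plan is straightforward induction on the round number $t$, as the author's wording suggests. For the base case $t=0$, note that the isomorphism $\pi$ preserves vertex degrees (since it preserves the edge relation), so $d(\pi(i)) = d(i)$ and node $\pi(i)$ in $G'$ implements the very same finite automaton $A_{d(i)}$ as node $i$ in $G$. At $t=0$ each node's state has the form $(x,\emptyset,\emptyset,\emptyset,\ldots,\emptyset)$; by the correspondence between the two initial-value tuples, the initial value held by node $\pi(i)$ in $G'$ agrees with $x_i$, and the remaining components match trivially.

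For the inductive step, suppose $S_i(t) = S_{\pi(i)}(t)$ for every $i$. The transition at node $i$ in $G$ is determined by $S_i(t)$ together with the tuple of incoming messages indexed by local port numbers $1,\ldots,d(i)$: the message arriving on port $p_k$ at $i$ is whatever neighbor $j_k$ placed on the port at $j_k$ leading to $i$. Under the port-preserving isomorphism, the neighbor of $\pi(i)$ in $G'$ connected via port $p_k$ is precisely $\pi(j_k)$, and the sending port used at $\pi(j_k)$ toward $\pi(i)$ coincides with the sending port used at $j_k$ toward $i$. Combined with the inductive hypothesis (which, since outgoing messages are part of the state, implies that $\pi(j_k)$'s outgoing messages match those of $j_k$), this shows that the entire tuple of incoming messages at $\pi(i)$ equals that at $i$. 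Feeding identical inputs into the identical automaton $A_{d(i)}$ yields identical next states, closing the induction.

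The proof is essentially bookkeeping: the only real content is the observation that the local view available to each automaton -- its current state together with its port-indexed incoming messages -- is fully preserved under any port-label-preserving graph isomorphism. The main source of possible confusion, and hence the only mild ``obstacle,'' is keeping the sending port at a neighbor separate from the receiving port at the node itself; both are handled uniformly because the hypothesis gives port preservation at both endpoints of each corresponding edge.
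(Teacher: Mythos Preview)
Your proposal is correct and follows exactly the approach the paper indicates: the paper does not actually spell out a proof but merely states that the lemma ``can easily be proved by induction on time,'' which is precisely the argument you carry out. Your handling of the base case and of the port-indexed incoming messages in the inductive step is accurate with respect to the model described in Section~2.
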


\begin{proposition} \label{symmetry}
Suppose that the family $ \{f_1(x_1), f_2(x_1,x_2), f_3(x_1,x_2,x_3),
\ldots \}$ is computable with infinite memory. Then, each $f_i$
is invariant under permutations of its arguments.
\end{proposition}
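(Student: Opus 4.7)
The plan is to deduce the proposition directly from Lemma \ref{switching} by exploiting a graph that has a lot of symmetry, namely the complete graph $K_n$. Given a permutation $\pi$ of $\{1,\ldots,n\}$ and initial values $x_1,\ldots,x_n$, I want to show $f_n(x_1,\ldots,x_n)=f_n(x_{\pi(1)},\ldots,x_{\pi(n)})$.

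First, I take both $G$ and $G'$ to be $K_n$ on vertex set $\{1,\ldots,n\}$, so the underlying graph is trivially isomorphic to itself via $\pi$. Next, I fix an arbitrary port labeling on $G$ and then \emph{define} the port labeling on $G'$ as follows: at vertex $\pi(i)$, the edge going to $\pi(j)$ is assigned the same port number as the edge going from $i$ to $j$ in $G$. This is a valid port labeling of $G'$, because in $K_n$ the neighbors of $\pi(i)$ are exactly $\{\pi(j):j\neq i\}$, and the original port numbers at $i$ give a bijection between $\{j:j\neq i\}$ and $\{1,\ldots,n-1\}$. By construction, the hypothesis of Lemma \ref{switching} is satisfied.

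Now I run the algorithm twice: once on $G$ with initial values $(x_1,\ldots,x_n)$, and once on $G'$ with initial values $(x_{\pi(1)},\ldots,x_{\pi(n)})$. By Lemma \ref{switching}, $S_i(t)$ in the first run equals $S_{\pi(i)}(t)$ in the second run for every $i$ and every $t$. In particular, the output coordinate $y_i(t)$ in the first run equals $y_{\pi(i)}(t)$ in the second run. Because the family is computable, both runs must produce a final output: in the first run all $y_i(t)$ eventually equal $f_n(x_1,\ldots,x_n)$, and in the second run all $y_{\pi(i)}(t)$ eventually equal $f_n(x_{\pi(1)},\ldots,x_{\pi(n)})$. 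Comparing these two constants via the state identification from the lemma forces $f_n(x_1,\ldots,x_n)=f_n(x_{\pi(1)},\ldots,x_{\pi(n)})$, which is exactly permutation invariance.

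I do not anticipate a real obstacle here; the proof is essentially a bookkeeping application of the switching lemma, and the only ``design'' choice is to select a graph (any vertex-transitive graph would do, but $K_n$ is the most convenient) on which every permutation of the vertices can be realized as an isomorphism that is consistent with a suitable port relabeling. The cases $n=1$ is vacuous, and the argument as stated handles all $n\geq 2$.
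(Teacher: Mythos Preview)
Your proof is correct and follows essentially the same approach as the paper: run the algorithm on the complete graph $K_n$, choose the port labeling on the second copy so that Lemma~\ref{switching} applies, and read off equality of the final outputs. The only cosmetic difference is that the paper argues first for transpositions $\pi_{ij}$ and then invokes the fact that transpositions generate $S_n$, whereas you handle an arbitrary permutation $\pi$ in one shot.
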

\begin{proof}
\blue{Let ${\pi_{ij}}$ be permutation that swaps $i$ with $j$; with a slight abuse of notation, we also denote by $\pi_{ij}$} the mapping from \blue{$X^n$ to $X^n$} that swaps the
$i$th and $j$th elements of a vector. We show that for all $x \in
\blue{X^n}$, $f_{\blue{n}}(x) = f_{\blue{n}}( \pi_{ij}( x))$.

We run our \green{distributed} algorithm \green{on} the $\blue{n}$-node
complete graph. Consider two different initial
configurations: (i) starting with the vector $x$; (ii) starting with
the vector $\pi_{ij}(x)$. Let the way each node enumerates his
neighbors in case (i) be arbitrary; \green{in case (ii), let the enumeration be}
\blue{such that the conditions in Lemma \ref{switching} are satisfied, which is easily accomplished.}
Since the limiting value of $y_i$ in one case is $f(x)$ and in
the other is $f(\pi_{ij}(x))$, we obtain $f(x)=f(\pi_{ij}(x))$.
\blue{Since the permutations $\pi_{ij}$ generate the group of permutations, permutation invariance follows.}
\end{proof}

Let \blue{$x\in X^n$.}  We will \green{denote by} $x^2$ the
concatenation of $x$ with itself, and, generally, $x^k$ the
concatenation of $k$ copies of $x$. We now prove that
self-concatenation does not affect the value of a computable
family of functions.

%In future draft, change m,k to n,m
\begin{proposition} \label{repetition} Suppose that the family
$\{f_1(x_1),
f_2(x_1,x_2), f_3(x_1,x_2,x_3), \ldots \}$ is computable with
infinite memory. Then, \red{for every $m\geq 2$}, every sequence
%all boolean sequences
\blue{$x\in X^m$}, and \green{every positive} integer~$k$, \[ f_m(x) = f_{km} (x^k).\]
\end{proposition}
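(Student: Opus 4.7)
The plan is a graph covering argument. I will construct a large graph $G'$ on $km$ nodes, a small graph $G$ on $m$ nodes, and a map $p : V(G') \to V(G)$ preserving degrees and respecting the port labelling, in such a way that initializing each $j \in V(G')$ with $x_{p(j)}$ forces the state of $j$ in $G'$ to agree with the state of $p(j)$ in $G$ at every round. Because the lift of $x$ under such a covering is (a relabelling of) $x^k$, the outputs on $G'$ will converge to the same value as on $G$, namely $f_m(x)$; by the defining property of a computable family this limit must also equal $f_{km}(x^k)$, giving the desired identity.

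For $m \geq 3$, the natural choice is $G = C_m$ (the cycle on $m$ vertices) with port $1$ at each node pointing clockwise and port $2$ pointing counter-clockwise, $G' = C_{km}$ with the analogous port assignment, and covering map $p(j) \equiv j \pmod m$. The core technical step is an induction on time: at $t = 0$ the states agree by construction, and for the inductive step the same automaton $A_2$ is applied on both sides to identical $(x, z, y)$ triples and to messages received on matching ports, the port matching following from the fact that the port-$\alpha$ neighbor of $j$ in $G'$ projects under $p$ to the port-$\alpha$ neighbor of $p(j)$ in $G$, for $\alpha \in \{1,2\}$. Passing to $t \to \infty$ then yields $f_m(x) = f_{km}(x^k)$.

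The case $m = 2$ requires separate handling since $K_2$ is the only connected $2$-node graph and admits no nontrivial connected cover with matching degrees. I would first reduce everything to the single identity $f_2(x_1, x_2) = f_4(x_1, x_2, x_1, x_2)$: for even $k = 2\ell \geq 2$, the $m = 4$ cycle-cover result applied to the vector $(x_1, x_2, x_1, x_2)$ yields $f_{2k}(x^k) = f_{4\ell}((x_1,x_2,x_1,x_2)^\ell) = f_4(x_1,x_2,x_1,x_2)$, while for odd $k \geq 3$ the already-proved $m = 2k \geq 6$ case with $k' = 2$ applied to $y = x^k$ gives $f_{2k}(x^k) = f_{4k}(x^{2k})$, reducing to the even case. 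Closing the loop with the identity $f_2(x_1,x_2) = f_4(x_1,x_2,x_1,x_2)$ is the chief obstacle I anticipate: it escapes the pure covering framework because the relevant degrees ($1$ versus $2$) do not match, and I would expect to need either a generalized notion of cover that permits multi-edges in the base graph (so that $C_4$ becomes an honest $2$-fold cover of a ``$K_2$ with two parallel edges'') together with an extra argument tying the $A_2$-computation back to the $A_1$-computation, or a direct simulation of $K_2$ inside a carefully chosen auxiliary graph.
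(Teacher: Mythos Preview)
Your core argument---the $m$-cycle to $km$-cycle covering with matching port labels, followed by induction on time showing that node $j$ in $C_{km}$ tracks node $j \bmod m$ in $C_m$---is precisely the paper's proof. The paper phrases it less formally (``ring of size $m$'', ``ring of size $km$'', each node labels its clockwise neighbour $1$ and counter-clockwise neighbour $2$), but the content is identical.

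Where you diverge is in your treatment of $m=2$. The paper makes no special provision for this case: it simply speaks of a ``ring of size $m$'' for all $m\geq 2$, implicitly treating the $2$-cycle (two nodes joined by two parallel edges, each of degree~$2$) as a legitimate instance of the model. Under that reading the covering $C_{2k}\to C_2$ works exactly as for larger $m$, and no separate argument is needed. You are right that if the model is restricted to \emph{simple} graphs this step is problematic, since the only connected $2$-node graph is then $K_2$ with degree-$1$ vertices, and $C_{2k}$ is not a cover of it. Your reduction of the $m=2$ case to the single identity $f_2(x_1,x_2)=f_4(x_1,x_2,x_1,x_2)$ is correct, but, as you yourself flag, that identity remains unproved in your outline; within a strict simple-graph model it genuinely requires relating an $A_1$-computation on $K_2$ to an $A_2$-computation on $C_4$, which the covering machinery alone cannot supply. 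The cleanest resolution is the one the paper tacitly adopts: allow parallel edges (the port-labelling formalism accommodates them without change), whereupon your obstacle evaporates and the proof is complete and coincides with the paper's. In short, you have been more scrupulous than the paper about a boundary case the paper glosses over; modulo that, the two proofs are the same.
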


\begin{proof}
Consider a ring of agents \blue{of} size $m$, where the $i$th agent
counterclockwise begins with the $i$th element of $x$; and
consider a ring of size $km$ where the agents
\red{$i,i+m,i+2m,\ldots$}
(counterclockwise) begin with the $i$th element of $x$.  Suppose
further that each node enumerates its two neighbors so that the
neighbor on the left is labelled 1, while the neighbor on the
right is labelled $2$. See Figure \ref{rings} for an example with
$m=3,k=2$ and $x_i=i$.

\begin{center}
\begin{figure}[h] \hspace{5cm}
\begin{minipage}[t]{.45\textwidth}
  \begin{center}        \epsfig{file=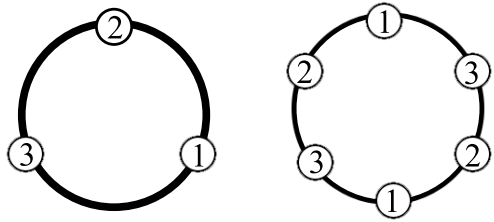, scale=0.35}
  \end{center}
\end{minipage}
\caption{Example of two situations that are \blue{indistinguishable} by
the nodes.} \label{rings}
\end{figure}   \end{center}

Initially, the state of node $i$ in the first ring is exactly the
same as the state of the nodes \blue{$j=i,i+m,i+2m,\ldots$} in the second ring. We show by induction \green{that this property} must hold at all
times $t$. \blue{(To keep notation simple, we assume, without any real loss of generality, that $i\neq 1$ and $i\neq m$.)}

Indeed, suppose this \green{property} holds up to time $t$. At time $t$, node $i$ in
the first ring receives a message from node $i-1$ and a message
from node $i+1$; and in the second ring, node \red{$j$} satisfying $\red{j \ (\mbox{mod } m)  } =
i$ receives one message from $j-1$ and $j+1$.
Since $\red{j-1 \ (\mbox{mod } m)  } =
i-1$ and $\red{j+1 \ (\mbox{mod } m)  } =
i+1$,
the states of \green{$j-1$ and $i-1$} are identical at time $t$,
\green{and similarly for
and $j+1$ and $i+1$. Thus the messages received by $i$ (in the first ring) and $j$ (in the second ring) at time $t$ are
identical. Since $i$ and $j$} were in the same state at time $t$, they
must be in the same state at time $t+1$. This proves that they are
in the same state forever.

It follows that $y_i(t) = y_j(t)$ for all $t$, whenever
$\red{j \ (\mbox{mod } m)  } =
i$, and therefore $f_m(x)=f_{km}(x^k)$.
\end{proof}

We can now prove our main \green{negative} result, stating that if a family of
functions is computable, then the value of the function only
depends on the \green{frequencies of the different possible initial values.}
\green{
We define $D=\{(p_1,\dots,p_K)\in [0,1]^K : \sum_{k=1}^Kp_k =1 \}$, which we call
the \emph{proportion set}. We say that a function $h:D\to Y$
corresponds to a family $(f_n: \blue{X}^n\to Y)$
if for every $x\in \blue{X}^n$,
$$
f(x_1,\dots,x_n)=
h\left(p_1(x_1,\ldots,x_n),\dots,p_K(x_1,\ldots,x_n)\right),
$$
where
\[ p_k(x_1,\ldots,x_n) = |\{ \blue{i} ~|~ x_{\blue{i}} = \blue{k}, \}|/n,\] so that
$p_k(x_1,\ldots,x_n)$ is the frequency of occurrence of the initial value $k$.
In this case, we say that the family $(f_n)$ is \emph{proportion-based.}}

\begin{theorem} \label{unbounded-bound}
Suppose that the family $(f_n)$ is computable with infinite memory. Then, this family is proportion-based.
\end{theorem}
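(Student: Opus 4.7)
The plan is to combine Propositions \ref{symmetry} (permutation invariance) and \ref{repetition} (invariance under self-concatenation) via a common-refinement trick: given any two input vectors sharing the same frequency vector, blow each of them up to a common length via self-concatenation, so that the two resulting vectors become permutations of one another.

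More concretely, I would proceed as follows. Let $x\in X^m$ and $x'\in X^{m'}$ be two input vectors with identical frequency vectors, i.e., $p_k(x)=p_k(x')$ for every $k\in\{1,\ldots,K\}$. Form the two self-concatenations $x^{m'}\in X^{mm'}$ and $(x')^m\in X^{mm'}$, which have a common length $mm'$. The number of occurrences of the symbol $k$ in $x^{m'}$ is $m'\cdot(m\,p_k(x))=mm'\,p_k$, and likewise the number of occurrences of $k$ in $(x')^m$ is $m\cdot(m'\,p_k(x'))=mm'\,p_k$. Hence the two vectors have the same multiset of entries, so one is a permutation of the other.

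Now Proposition \ref{symmetry} gives $f_{mm'}(x^{m'})=f_{mm'}((x')^m)$, while two applications of Proposition \ref{repetition} give $f_m(x)=f_{mm'}(x^{m'})$ and $f_{m'}(x')=f_{mm'}((x')^m)$. Chaining these three equalities yields $f_m(x)=f_{m'}(x')$, so the value of $f_n(x)$ depends on $x$ only through its frequency vector $\bigl(p_1(x),\ldots,p_K(x)\bigr)$. Consequently one may define $h:D\to Y$ by setting $h(p_1,\ldots,p_K):=f_n(x)$ for any $n$ and any $x\in X^n$ realizing the rational tuple $(p_1,\ldots,p_K)$ (the value being independent of the choice by the argument above), and extending $h$ arbitrarily to the irrational points of $D$. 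This $h$ corresponds to $(f_n)$ in the sense of the definition, so $(f_n)$ is proportion-based.

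There is no serious obstacle here; the one point to verify carefully is that the two earlier propositions apply in sufficient generality: Proposition \ref{symmetry} gives invariance under every permutation (the proof uses that transpositions generate the symmetric group), and Proposition \ref{repetition} applies to every $m\geq 2$ and every $k\geq 1$. The degenerate corner where $m$ or $m'$ equals $1$ can be absorbed by first passing from $x$ to $x^2$, which does not change the frequency vector, so that both representatives have length at least $2$ before the common-refinement step is invoked.
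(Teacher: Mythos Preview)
Your proof is correct and is essentially identical to the paper's own argument: both blow up $x$ and $x'$ to the common length $mm'$, invoke Proposition~\ref{symmetry} on the resulting permutations, and then collapse back via Proposition~\ref{repetition}. Your added remarks on defining $h$ at irrational points and on the $m=1$ corner case go slightly beyond what the paper writes out, though note that your proposed fix for $m=1$ (replace $x$ by $x^2$) would itself require the excluded $m=1$ instance of Proposition~\ref{repetition}; this is moot here since the model's automata family $(A_d)_{d\geq 1}$ implicitly rules out isolated nodes and hence $n=1$.
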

\begin{proof}
Let $x$ and $y$ be two sequences of $n$ and $m$
elements, respectively,  such that $p_k(x_1,\dots,x_n)= p_k(y_1,\dots,y_m)=: \hat
p_k$ for $k=1,\dots,K$, that is, the number of occurrences
of $k$ in $x$ and $y$ are $n\hat p_k$ and $m\hat
p_k$, respectively. Observe that \blue{for any $k\in X$, the vectors}  $x^m$ and $y^n$ have the same number $mn$ of
elements, and both contain $mn\hat p_k$ occurrences of \green{$k$.}
The sequences $y^n$ and $x^m$ can thus be obtained
from each other by a permutation, which by Proposition
\ref{symmetry} implies that $f_{nm}(x^m) = f_{nm}(y^n)$. It
then follows from Proposition \ref{repetition} that $f_{nm}(x^m) = f_n(x)$
and $f_{mn}(y^n)=f_m(y)$, so that $f_n(x) = f_m(y)$. This proves
that the value of $f_n(x)$ is determined by the vector
$(p_1(x),\dots,p_n(x))$.
\end{proof}

\def\texitem#1{\par%\vspace{5pt}
\noindent\hangindent 25pt
\hbox to 25pt {\hss #1 ~}\ignorespaces}

The following examples illustrate this result.

\texitem{(a)}
The parity function $\sum_{i=1}^n x_i ~ ({\rm mod ~k})$ is not
computable, for any $k>0$.
\texitem{(b)}
In a binary setting \blue{$(X=\{0,1\})$}, checking whether the number of nodes with
\green{$x_i=1$ is at least $10$ plus} the number of nodes with $x_i=0$ is
not computable.
\texitem{(c)}
Solitude verification, i.e.
checking whether $|i: \{x_i=0\}|=1$, is not computable.
\texitem{(d)}
An aggregate difference functions such as
$\sum_{i<j} |x_i - x_j|$ is not computable, even \green{calculated modulo $k$.}

%%%%%%%%%%%%%%%%%%%%%%%%%%%%%%%%%%%%%%%%%%%%%%%%%%%%%%%%%%%%%%%%%%%%%%%
%%%%%%%%%%%%%%%%%%%%%%%%%%%%%%%%%%%%%%%%%%%%%%%%%%%%%%%%%%%%%%%%%%%%%%%
%%%%%%%%%%%%%%%%%%%%%%%%%%%%%%%%%%%%%%%%%%%%%%%%%%%%%%%%%%%%%%%%%%%%%%%
%\section{Sufficient conditions}
%%%%%%%%%%%%%%%%%%%%%%%%%%%%%%%%%%%%%%%%%%%%%%%%%%%%%%%%%%%%%%%%%%%%%%%
%%%%%%%%%%%%%%%%%%%%%%%%%%%%%%%%%%%%%%%%%%%%%%%%%%%%%%%%%%%%%%%%%%%%%%%

%%%%%%%%%%%%%%%%%%%%%%%%%%%%%%%%%%%%%%%%%%%%%%%%%%%%%%%%%%%%%%
%%%%%%%%%%%%%%%%%%%%%%%%%%%%%%%%%%%%%%%%%%%%%%%%%%%%%%%%%%%%%%
%%%%%%%%%%%%%%%%%%%%%%%%%%%%%%%%%%%%%%%%%%%%%%%%%%%%%%%%%%%%%
%%%%%%%%%%%%%%%%%%%%%%%%%%%%%%%%%%%%%%%%%%%%%%%%%%%%%%%%%%%%%
\section{Reduction of generic functions to the computation of averages}\label{sec:reduction_to_avg}
%%%%%%%%%%%%%%%%%%%%%%%%%%%%%%%%%%%%%%%%%%%%%%%%%%%%%%%%%%%%%

In this section, we show that the computability question for large
classes of functions reduces to the computability question for a
particular averaging-like function. Namely, we will make use of
the following theorem.

\begin{theorem}\label{thm:average_computation} Let $X = \{0,\ldots,K\}$ and
define $Y$ to be following set of single-point sets and intervals:
\[ Y = \{ \{0\}, (0,1), \{1\}, (1,2),  \ldots,
\{K-1\}, (K-1,K), \{K\} \}\]
\green{(or equivalently, an indexing of this finite collection of intervals).}
Let $(f_n)$ be the following family
of functions: $f_n$ maps $(x_1,x_2,\ldots,x_n)$ to the element
of $Y$ which contains the average $\sum_i x_i/n$. Then, the family
$(f_n)$ is computable.
\end{theorem}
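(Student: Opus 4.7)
The plan is to build a two-layer protocol: an inner \emph{quantized load-balancing} loop that redistributes the initial integer values $x_i$ into a ``balanced'' configuration, together with an outer \emph{extrema-propagation} loop that lets every node decide which element of $Y$ contains the average.

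\textbf{Step 1 (the load-balancing layer).} Each node $i$ carries a bounded integer ``load'' $s_i(t)$ with $s_i(0)=x_i$, and exchanges with neighbors in a way that preserves $\sum_i s_i(t)$ and keeps each $s_i(t)$ in $\{0,1,\ldots,K\}$. To cope with anonymity and determinism I would have each node use a fixed round-robin schedule on its port labels: at round $t$, node $i$ looks at the neighbor on its current port and ``proposes'' a transfer of one token from the larger to the smaller side whenever their loads differ by at least $2$. A transfer is committed only when both endpoints simultaneously select the matching port (a handshake encoded in the messages in $M$); this avoids race conditions and fits within the $\log|Z_d|\le Cd$ memory budget, since the per-node state only needs to remember $s_i$, the current port pointer, and the handshake bits.

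\textbf{Step 2 (convergence to within $1$).} I would use the Lyapunov function $V(t)=\sum_i s_i(t)^2$, which is nonnegative, integer-valued, and strictly decreases whenever a committed transfer occurs between two nodes whose loads differ by at least $2$. Combined with the round-robin port schedule and connectedness of $G$, every edge is examined infinitely often, so the system must reach a configuration in which $|s_i-s_j|\le 1$ for all adjacent $i,j$; by connectedness, this forces $\max_i s_i-\min_i s_i\le 1$. Writing $\bar x=\tfrac{1}{n}\sum_i x_i$ and $L=\lfloor\bar x\rfloor$, conservation of the sum forces the final multiset of loads to be either all $L$ (when $\bar x=L$) or a mixture of $L$'s and $L{+}1$'s (when $\bar x\in(L,L{+}1)$). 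From some time onward, $s_i(t)\in\{L,L{+}1\}$ permanently, even if tokens keep swapping.

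\textbf{Step 3 (the extrema-propagation layer).} In parallel, each node maintains two additional registers $M_i(t)$ and $m_i(t)$, initialized to $x_i$, and updates them by $M_i(t{+}1)=\max(s_i(t{+}1),\max_{j\sim i}M_j(t))$ and $m_i(t{+}1)=\min(s_i(t{+}1),\min_{j\sim i}m_j(t))$. These take values in the finite set $\{0,\ldots,K\}$, so they fit in the automaton. A standard diameter-style induction shows that $M_i$ and $m_i$ are monotone and eventually equal to the network-wide maximum and minimum of the eventual loads, namely $L$ and either $L$ or $L{+}1$. Since both $M_i$ and $m_i$ are monotone in finite totally ordered sets, they each stabilize in finite time.

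\textbf{Step 4 (producing the output).} Once $M_i$ and $m_i$ have settled, node $i$ outputs $\{M_i\}\in Y$ if $M_i=m_i$, and otherwise the open interval $(m_i,M_i)=(L,L{+}1)\in Y$. By Step 2 this output equals the element of $Y$ that contains $\bar x$, and by Step 3 it is stable in $t$, giving a well-defined final output. The main obstacle I expect is Step 1: designing the port-based handshake so that, in the fully anonymous and deterministic setting, (i) no token is ever duplicated or lost, (ii) every edge is used for balancing infinitely often, and (iii) the whole construction still fits under the $\log|Z_d|\le Cd$ memory bound. Once that bookkeeping is done cleanly, the Lyapunov argument in Step 2 and the monotone extrema propagation in Step 3 are essentially routine.
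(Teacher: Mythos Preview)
There are two genuine gaps, both in places you treated as routine.

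\textbf{Local balance is not global balance.} In Step~2 you assert that once $|s_i-s_j|\le 1$ for all adjacent $i,j$, connectedness forces $\max_i s_i-\min_i s_i\le 1$. This is false: on a path of three nodes with loads $(0,1,2)$, every adjacent pair differs by exactly $1$, so no transfer is ever triggered (your threshold is $\ge 2$), yet the global spread is $2$. More generally, any monotone labeling of a path is a fixed point of your balancing rule. Purely local pairwise balancing with threshold $2$ therefore cannot drive the configuration into $\{L,L{+}1\}^n$, and lowering the threshold to $1$ destroys the Lyapunov argument (a swap between loads $1$ and $2$ leaves $\sum_i s_i^2$ unchanged and can cycle forever). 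The paper avoids this by making transfers \emph{non-local}: the max-tracking layer equips each node with a pointer toward a current maximizer, and a low-load node sends a request along that pointer chain until it reaches a node whose load exceeds its own by at least $2$, which then ships value back. That is why, in the paper, balancing and max-tracking are intertwined rather than layered.

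\textbf{Max-propagation over time-varying inputs need not converge.} Your update $M_i(t{+}1)=\max\bigl(s_i(t{+}1),\max_{j\sim i}M_j(t)\bigr)$ is not monotone in $t$ while the $s_i$ are still moving, and it can fail to settle even after the $s_i$ stabilize. On the two-node graph with $s(0)=(4,0)$, one-token transfers give $s(1)=(3,1)$ and $s(t)=(2,2)$ for $t\ge 2$; tracing your rule yields $M(1)=(3,4)$, $M(2)=(4,3)$, $M(3)=(3,4)$, $M(4)=(4,3)$, and so on: the stale value $4$ bounces back and forth forever and neither node ever outputs the correct element $\{2\}$. This is precisely the difficulty the paper isolates in its separate maximum-tracking subroutine, which handles time-varying inputs by issuing reset messages whenever an input changes, with nontrivial bookkeeping to keep those resets from circulating indefinitely. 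That mechanism, not naive flooding, is what is needed to make your Step~4 sound.
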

%\begin{theorem}{average computation}
%Suppose that the initial conditions $x_i$ are multiples of $1/U$
%between 0 and $K$. Let $I = \{[i/U]:i\in 0,\dots KU\}\cup
%\{(\frac{i-1}{U},\frac{i}{U}):i =1,\dots UK\}$ be the partition of
%$[0,K]$ in multiple of $1/U$ and open intervals separating
%them.
%The family of functions defined by
%$$f:{0,1/U,\dots,K}^n \to I : f(x_1,\dots,x_n) = v  \mbox{\phantom{aaaa} if  \phantom{aaaa}}
%\frac{\sum_{i=1}^nx_i}{n} \in v$$ is computable.
%\end{theorem}

The proof of Theorem \ref{thm:average_computation} is fairly involved and too long
to be included in this extended abstract; however, we give an informal description of the algorithm for
computing $f_n$ in Section \ref{sec:comput_average}.
In this section, we show that Theorem
\ref{thm:average_computation} implies \green{the computability of} a large class of functions.
\green{We will say that
a function $h$ on the proportion set
is \emph{computable} if it it corresponds to a proportion-based computable family $(f_n)$.  The
level sets of $h$ are defined as the sets $L(y) = \{ \red{p} \in D ~|~ h(p) = y\}$,
for $y \in Y$.}

\begin{theorem}[Sufficient condition for computability] \label{bounded-alg}
Let $h$ be a function from the proportion set $D$ to $Y$. \green{Suppose that} every level set $L(y)$ can be written as a
finite union,
\[ L(y) =  \bigcup_k C_{i,k},\] where each $C_{i,k}$ can in turn be
written as a finite intersection of \green{linear} inequalities \green{of the form}
\[ \alpha_1 p_1 + \alpha_2 p_2 + \cdots + \alpha_K p_K \leq \alpha,
\] or
\[ \alpha_1 p_1 + \alpha_2 p_2 + \cdots + \alpha_K p_K < \alpha,
\]
with rational \green{coefficients}  $\alpha,\alpha_1,\ldots,\alpha_K$. Then, $h$
is computable.
%Let $(f_n)$ be a family of function for which there exists a
%corresponding function $h_f$ on the density space. \comjh{should
%we remind that the output must be discrete?} If each level set
%$A_i$ of $h_f$ can be written as finite union
%\[ A_{i} = \bigcup_k C_{i,k},\] where each $C_{i,k}$ can in turn be
%written as a finite intersection of inequalities such as
%\[ \alpha_1 p_1 + \alpha_2 p_2 + \cdots + \alpha_K p_K \leq \alpha,
%\] or
%\[ \alpha_1 p_1 + \alpha_2 p_2 + \cdots + \alpha_K p_K < \alpha,
%\]
%where $\alpha,\alpha_1,\ldots,\alpha_K$ are rational numbers, then
%the family $(f_n)$ is computable.
\end{theorem}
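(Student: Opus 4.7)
The plan is to use Theorem \ref{thm:average_computation} as a subroutine and reduce every linear inequality appearing in the description of a level set to an instance of the average-interval problem. Running finitely many such instances in parallel and combining their Boolean outputs according to the given union-of-intersections description of each $L(y)$ will yield an algorithm for $h$.

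To handle a single inequality $\alpha_1 p_1 + \cdots + \alpha_K p_K \bowtie \alpha$ with $\bowtie \in \{\le, <\}$ and rational coefficients, I would first rewrite $\sum_k \alpha_k p_k = (1/n)\sum_i \alpha_{x_i}$, since $p_k$ is the frequency of value $k$. Then I would clear denominators by multiplying through by a common positive integer denominator $D$ of $\alpha, \alpha_1, \dots, \alpha_K$, and add a constant shift $M$ to ensure that all the transformed per-value weights $\tilde\beta_k := D\alpha_k + M$ are nonnegative integers lying in some $\{0,1,\dots,K'\}$. The threshold becomes the integer $\tilde\beta := D\alpha + M$, and the inequality is equivalent to $(1/n)\sum_i \tilde\beta_{x_i} \bowtie \tilde\beta$. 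Each agent performs the purely local relabeling $x_i \mapsto \tilde\beta_{x_i}$ and feeds this value into the automaton family promised by Theorem \ref{thm:average_computation} for alphabet size $K'+1$.

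The subroutine eventually identifies the unique element of $Y = \{\{0\},(0,1),\{1\},\dots,\{K'\}\}$ containing the shifted-scaled average. Since $\tilde\beta$ is an \emph{integer}, each such interval is either entirely contained in the half-line $\{t : t \bowtie \tilde\beta\}$ or entirely disjoint from it, so a finite look-up table converts the interval output into the Boolean truth value of the original inequality. The subtle point, which is precisely where I expect the main difficulty to lie, is the correct treatment of strict versus non-strict inequalities at the boundary $(1/n)\sum_i \alpha_{x_i} = \alpha$; this is handled cleanly because $Y$ distinguishes the singleton $\{\tilde\beta\}$ from the adjacent open intervals $(\tilde\beta-1,\tilde\beta)$ and $(\tilde\beta,\tilde\beta+1)$.

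Finally, since $Y$ is finite and each level set $L(y)$ is described by finitely many inequalities, only a fixed, $h$-dependent number of subroutine copies are needed. The agents run all of them in parallel; the combined automaton still has a finite state space whose size does not depend on $n$, so the memory bound $\log|Z_d| \le Cd$ is preserved (up to enlarging $C$). Once each subroutine's Boolean output has stabilized to the truth value of its inequality, the agent evaluates the finite Boolean formulas for the level sets and outputs the unique $y$ whose formula is satisfied; by Theorem \ref{thm:average_computation} this happens in finite time at every node, and by construction the stabilized output equals $h(p_1,\dots,p_K)$, proving computability of the corresponding proportion-based family.
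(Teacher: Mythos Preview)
Your proposal is correct and follows essentially the same route as the paper: reduce each rational linear inequality to an integer-average threshold test by clearing denominators and shifting to nonnegative per-node weights, invoke Theorem~\ref{thm:average_computation} on the relabeled inputs, and run the finitely many resulting automata in parallel. The only cosmetic difference is that the paper achieves nonnegativity by splitting indices according to the sign of $\alpha_k$ and using $1-\chi_k$ for the negative ones, whereas you add a uniform constant $M$ (exploiting $\sum_k p_k=1$); your treatment of the strict/non-strict distinction via the singleton $\{\tilde\beta\}$ in $Y$ is in fact more explicit than the paper's.
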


\begin{proof}
Consider one such linear inequality. Let $P$ be the set of indices
\green{$i$ for which} $\alpha_i \geq 0$. Since all coefficients are rational, we
can clear the denominators and rewrite \green{the inequality} as
\begin{equation}\label{eq:integ_equiv_density_ineq}
\sum_{k \in P} \beta_k p_k - \sum_{i \in P^c} \beta_k p_k \leq
\beta , \end{equation} for \red{nonnegative} integers $\beta_k$ and
$\beta$. Let $\chi_k$ be the indicator function \green{associated with initial
value} $k$, i.e., $\chi_k(i)=1$ if $x_i = k$, and $\chi_k(i) =0$
otherwise, so that $p_k=\frac{1}{n}\sum_{i}\chi_k(i)$.
\green{Then, (\ref{eq:integ_equiv_density_ineq}), becomes} $$\frac{1}{n} \sum_{i=1}^n \left(\sum_{k\in P}\beta_k
\chi_k(i) + \sum_{k\in P^c}\beta_k  (1-\chi_k(i))\right) \leq
\beta + \sum_{k\in \red{P^c}}\beta_k, $$ or \[ \frac{1}{n}\sum_{i} q_i
\leq  q^*, \] where $q_i = \sum_{k\in P}\beta_k \chi_k(i) +
\sum_{k\in P^c}\beta_k (1-\chi_k(i))$ and $q^* =\beta + \sum_{k\in
P^c}\beta_k$.

To determine if the latter inequality is satisfied, each node can
compute $q_i$ and $q^*$, \green{and then apply a distributed algorithm that
computes $\frac{1}{n}\sum_{i=1}^n q_i$, which is possible by virtue of} Theorem
\ref{thm:average_computation}. To check any finite collection of
inequalities, the nodes can perform the computations for each
inequality in parallel.

To compute $h$, the nodes simply need to check which \green{set} $L(y)$ the
\green{frequencies} $p_1,\ldots,p_K$ lie in, and this can be done by
checking the inequalities defining each $L(y)$.
\green{All of these computations} can be accomplished
with finite automata: indeed, we do nothing more than run finitely
many copies of the automata provided by Theorem
\ref{thm:average_computation}, one for each inequality.
\end{proof}

Theorem \ref{bounded-alg} shows the computability of functions $h$
whose level-sets \green{can} be defined by linear inequalities with
rational coefficients.
\red{On the other hand, it is clear that not every function $h$ can be computable.} \blue{(This can be shown by a counting argument: there are uncountably many possible functions $h$, but for the special case of bounded degree graphs, only countably possible algorithms.)} \green{Still, the next lemma shows that the set of computable functions is rich enough, in the sense that such}
functions
can approximate any measurable function.

We will call a set of the form $\prod_{\red{k}=1}^K (a_k,b_k)$, \green{with every $a_k,b_k$ rational,} {\em a \red{rational open}
box}, where $\prod$ \green{stands for Cartesian}
product. A function that can be written as a finite sum $\sum_i
a_i \red{1_{B_i}}$, where \green{the $B_i$ are rational open boxes and the $1_{B_i}$ are the associated indicator functions,} will be referred to as a \emph{box
function.} \green{Note that box functions are computable by Theorem
\ref{bounded-alg}.}

\begin{corollary}\label{cor:approx_set_epsilon}
If every level set of a function $h:D\to Y$ on the proportion set
is Lebesgue measurable, then, for every $\epsilon >0$, there exists a
computable box function $h_\epsilon:D\to Y$ such that the
set $\{ h \neq h_{\epsilon} \}$ has measure at most $\epsilon$.
\end{corollary}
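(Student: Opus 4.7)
The plan is to take $h_\epsilon$ to be a step function on a sufficiently fine rational grid, assigning to each cell the $Y$-value that $h$ attains on the largest portion of the cell.

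First, I will enumerate the finite set $Y$ as $\{y_1,\dots,y_m\}$, so that the level sets $L(y_1),\dots,L(y_m)$ form a Lebesgue-measurable partition of $D$. Working in the natural $(K-1)$-dimensional parametrization of the simplex $D$ by the coordinates $(p_1,\dots,p_{K-1})$, I will fix a large integer $N$ and partition the ambient cube into rational open boxes $Q_1^{(N)},\dots,Q_{M_N}^{(N)}$ of side $1/N$. I will then define $h_{\epsilon,N}$ to be the step function that, on each cell $Q_j^{(N)}$, takes the constant value $y_{k(j)}$, where $k(j)$ is an index maximizing $\mu(Q_j^{(N)} \cap L(y_k))$ over $k=1,\dots,m$, ties broken arbitrarily. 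Then $h_{\epsilon,N} = \sum_j y_{k(j)}\, 1_{Q_j^{(N)}}$ is manifestly a box function, hence computable by Theorem \ref{bounded-alg}.

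Next I will show that $\mu(\{h \neq h_{\epsilon,N}\}) \to 0$ as $N \to \infty$, which lets me pick $N$ large enough to force the measure of disagreement below $\epsilon$. The key tool is the Lebesgue density theorem: for each $k$ and for almost every $p \in L(y_k)$, the density of $L(y_k)$ at $p$ equals $1$. This property transfers from balls centered at $p$ to the off-center cube $Q^{(N)}(p)$ containing $p$, because such a cube sits inside a ball of radius $O(1/N)$ centered at $p$ whose volume is comparable to the volume of the cube (bounded eccentricity). Hence, for any such $p$ and every sufficiently large $N$, the fraction of $Q^{(N)}(p)$ lying in $L(y_k)$ exceeds $1/m$, which forces $k(j) = k$ in the maximization defining $h_{\epsilon,N}$, and therefore $h_{\epsilon,N}(p) = y_k = h(p)$. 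Thus the indicator $1_{\{h \neq h_{\epsilon,N}\}}$ tends to $0$ pointwise almost everywhere, and dominated convergence with majorant $1_D$ gives $\mu(\{h \neq h_{\epsilon,N}\}) \to 0$.

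The main obstacle is merely the measure-theoretic bookkeeping: checking that Lebesgue density, classically stated with balls centered at $p$, carries over to off-centered cubes containing $p$ (a standard bounded-eccentricity argument), and observing that the measure-zero overlap of cell boundaries causes no difficulty, since on the interior of each cell the sum $\sum_j y_{k(j)}\, 1_{Q_j^{(N)}}$ reduces to a single $Y$-value, and boundary ambiguities contribute nothing to Lebesgue measure.
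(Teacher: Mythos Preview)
Your approach is correct in outline but differs from the paper's. The paper simply invokes the regularity of Lebesgue measure: every measurable $E\subset[0,1]^K$ can be approximated in symmetric difference by a finite union of (rational) open boxes; applying this to each level set $L(y_k)$ separately and assembling the results yields a box function $h_\epsilon$ agreeing with $h$ off a set of measure at most $\epsilon$. You instead fix a uniform rational grid, assign to each cell the plurality $Y$-value, and prove convergence via the Lebesgue density theorem and dominated convergence. Your construction is more explicit and produces a canonical one-parameter family $h_{\epsilon,N}$; the paper's argument is shorter and relies only on elementary regularity, though it leaves implicit the (routine) step of turning per-level-set box approximations into a single $Y$-valued box function.

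One small slip to fix: exceeding $1/m$ does \emph{not} force $k(j)=k$; with $m=3$ and cell fractions $0.4,\,0.4,\,0.2$, two labels exceed $1/3$ and the tie could be broken the wrong way. What you actually need is that the fraction exceeds $1/2$, which is free since at a density point the fraction tends to $1$; then the remaining fractions together sum to less than $1/2$, so $k$ is the unique maximizer. With that correction (and the harmless extension of $h$ to the ambient cube so that cells straddling $\partial D$ cause no trouble), your argument goes through.
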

\begin{proof}
The proof relies on the following elementary result from measure
theory.
Given a Lebesgue measurable set $E \in [0,1]^\red{K}$ and some $\epsilon
> 0$, there \green{exists a set $E'$ which is a finite union of
disjoint open boxes, and which satisfies}
\[ \mu( (E - E') \cup (E'-E) ) < \epsilon,\] where $\mu$ is
the Lebesgue measure.
\red{By a routine argument, these boxes can be taken to be rational.
By applying this fact to the level sets of the function $h$ (assumed measurable), the function $h$ can be approximated by a box function}  \blue{$h_{\epsilon}$. Since box functions are computable, the result follows.}
\end{proof}

The following corollary \green{states} that quantizations of continuous
functions are approximable.

\begin{corollary}\label{cor:approx_continuous}
If a function $h:D\to [L,U]\subseteq \Re$ is continuous, then for
every $\epsilon>0$ there exists a computable function
$h_\epsilon:D \blue{\to  [L,U]}$ such that
$\|h-h_\epsilon\|_\infty < \epsilon$
\end{corollary}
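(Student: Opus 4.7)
The plan is to reduce to Theorem \ref{bounded-alg} by approximating $h$ with a step function that is constant on small rational boxes and takes values in a finite quantization of $[L,U]$. The domain $D$ is a compact subset of $[0,1]^K$ and $h$ is continuous, hence uniformly continuous: we can choose $\delta>0$ such that $|h(p)-h(q)| < \epsilon/2$ whenever $p,q\in D$ and $\|p-q\|_\infty<\delta$.

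Next I would discretize the range. Pick a finite set $V\subset [L,U]$ (for instance, $V$ consists of equally spaced points at spacing $\epsilon/2$, together with $L$ and $U$) such that every value in $[L,U]$ lies within $\epsilon/2$ of some element of $V$. Then I would tile $[0,1]^K$ by a finite collection of pairwise disjoint half-open boxes $B_1,\dots,B_N$ of the form $[a_1,b_1)\times \cdots\times [a_K,b_K)$ with \emph{rational} endpoints $a_k,b_k$ and with diameters smaller than $\delta$. For each index $i$ such that $B_i \cap D \neq \emptyset$, select an arbitrary reference point $p_i\in B_i\cap D$ and let $v_i\in V$ be an element of $V$ nearest to $h(p_i)$; for indices $i$ with $B_i\cap D =\emptyset$, pick $v_i=L$ (the value is irrelevant). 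Finally, define $h_\epsilon:D\to [L,U]$ by $h_\epsilon(p)=v_i$ whenever $p\in B_i\cap D$.

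By construction, for any $p\in D$ lying in the box $B_i$, we have $\|p-p_i\|_\infty <\delta$, so $|h(p)-h(p_i)|<\epsilon/2$, and $|h(p_i)-v_i|\le \epsilon/2$ by the choice of $V$; hence $|h(p)-h_\epsilon(p)|<\epsilon$, giving $\|h-h_\epsilon\|_\infty<\epsilon$. It remains to verify computability of $h_\epsilon$. For each value $v\in V$, the level set
\[
L(v) = \bigcup_{i:\, v_i = v} B_i\cap D
\]
is a finite union of sets of the form $B_i\cap D$, each of which is defined by finitely many rational linear inequalities: the box constraints $a_k\le p_k$ and $p_k<b_k$ for $k=1,\dots,K$, plus the simplex constraints defining $D$, namely $p_k\ge 0$ and $\sum_k p_k=1$ (the latter can be written as two weak inequalities). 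Since $V$ is a finite subset of $[L,U]$, we can regard $Y:=V$ as the finite output set, and Theorem \ref{bounded-alg} then yields the computability of $h_\epsilon$.

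The only real obstacle is the measure-theoretic/topological one of controlling the oscillation of $h$ on the pieces of the partition while keeping everything rational; this is handled by the combination of uniform continuity (which gives $\delta$) and the freedom to choose rational box endpoints arbitrarily close to any given real numbers, so that a rational grid of mesh $<\delta$ exists. No other obstructions arise, and the rest is bookkeeping.
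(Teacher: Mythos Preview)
Your proof is correct and follows essentially the same approach as the paper: use uniform continuity on the compact domain $D$ to partition it into finitely many pieces described by rational linear inequalities on which $h$ oscillates by less than $\epsilon$, quantize the range to a finite grid in $[L,U]$, and invoke Theorem~\ref{bounded-alg}. Your version is simply more explicit about the construction of the boxes and the verification of the error bound.
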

\begin{proof}
Since $D$ is compact, $f$ is uniformly continuous. One can
therefore \green{partition} $D$ into a finite number of subsets
$A_1,A_2,\dots,A_q$, that can be \green{described} by linear inequalities
with rational coefficients, so that $\max_{p\in
A_j}h(p)-\min_{p\in A_j}h(p) <\epsilon$ holds for all $A_j$. The
function $h_\epsilon$ is then built by assigning to each $A_j$ an
\green{appropriate} value \green{in} $\{L,L+\epsilon, L+2\epsilon, \dots, U\}$.
\end{proof}

Finally, we show that with infinite memory, it is possible to
recover the exact \green{frequencies $p_k$.}
\red{(Note that this is impossible with finite memory, because $n$ is unbounded, and the number of bits needed to represent $p_k$ is also unbounded.)}
\begin{theorem} \label{thm:infinite_mem_positive}
The vector $(p_1,\ldots,p_K)$ is computable with infinite memory.
\end{theorem}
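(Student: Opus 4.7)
The plan is to reduce to a countable collection of yes/no queries ``is $p_k = q$?'' for rational $q \in [0,1]$, each of which we solve via Theorem \ref{thm:average_computation}, and to run all these queries concurrently using the countable memory (and hence countably many parallel sub-state-machines) available in the infinite-memory model.

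For a single query, fix $k \in \{1,\ldots,K\}$ and a rational $q = a/b \in [0,1]$ in lowest terms, and consider the derived inputs $v_i = b \chi_k(i) \in \{0,b\}$, which have average $\frac{1}{n}\sum_i v_i = bp_k$. By Theorem \ref{thm:average_computation} (applied with upper bound $b$ in place of $K$), there is a finite-state distributed algorithm $\mathcal{A}_{k,q}$ whose output at every node eventually stabilizes to the element of $\{\{0\},(0,1),\{1\},\ldots,\{b\}\}$ that contains $bp_k$; in particular, it stabilizes at the singleton $\{a\}$ precisely when $p_k = q$.

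To run the full collection $\{\mathcal{A}_{k,q_j}\}_{k,j}$ (with $q_1, q_2, \ldots$ a fixed enumeration of the rationals in $[0,1]$ in lowest terms) concurrently under a single fixed finite message alphabet, I would use a bit-serial simulation: fix $M = \{0,1\}$, fix a schedule $\sigma: \N \to \{1,\ldots,K\} \times \N$ with every preimage infinite, and in round $t$ have each agent advance one ``bit-round'' of $\mathcal{A}_{k,q_j}$ for $(k,j) = \sigma(t)$. Each agent's countable internal memory stores a separate state vector for every $(k,j)$ pair together with counters tracking how far each sub-algorithm has progressed; all agents agree on $\sigma$ because it is a deterministic function of the (synchronized) round number.

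The output at time $t$ is defined as $(\hat p_1(t),\ldots,\hat p_K(t))$, where $\hat p_k(t) = q_{j_k^*(t)}$ and $j_k^*(t)$ is the smallest $j$ for which $\mathcal{A}_{k,q_j}$ currently reports $p_k = q_j$ (with $\emptyset$ placed in that coordinate until such a $j$ exists). Since each sub-algorithm is scheduled infinitely often, it eventually reaches its stable, correct output; hence for every $k$ there is a time after which all sub-algorithms $\mathcal{A}_{k,q_1}, \ldots, \mathcal{A}_{k,q_{j_k^*}}$ (where $j_k^*$ is the unique enumeration index with $q_{j_k^*} = p_k$) have permanently committed to the correct answer, and the overall output settles at $(p_1,\ldots,p_K)$ forever after. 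The main obstacle is the mismatch between the varying alphabet sizes required by the sub-algorithms $\mathcal{A}_{k,q_j}$ (which grow with the denominator of $q_j$) and the single fixed finite alphabet $M$ allowed by the model; the bit-serial simulation resolves this but requires careful bookkeeping so that each sub-algorithm's virtual messages are transmitted and decoded correctly across the blocks of actual rounds that it occupies.
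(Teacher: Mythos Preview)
Your approach is correct and essentially identical to the paper's: both rescale the indicator $\chi_k$ by an integer factor, invoke Theorem~\ref{thm:average_computation}, interleave countably many such sub-algorithms on a fixed schedule, and output according to the first one whose answer is a singleton. The only cosmetic difference is that the paper indexes its sub-algorithms $Q_m$ by the scaling factor $m$ alone (reading off the numerator from $Q_m$'s output) rather than by the full rational $a/b$, and your explicit bit-serial handling of the fixed finite message alphabet $M$ is in fact more careful than the paper, which glosses over that point.
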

\begin{proof}
We show that $p_1$ is computable exactly, which is sufficient to
prove the theorem. Consider the following algorithm, parametrized
by a \green{positive integer} \red{$m$.} The initial set $X_m$ will be $\{0,1,\ldots,
m\}$ and the output set $Y_m$ will be as in Theorem
\ref{thm:average_computation}:
$Y_m = \{ \{0\}, (0,1), \{1\}, (1,2), \{2\}, (2, 3),
\ldots, \{m-1\}, (m-1,m), \{m\} \}$. If $x_i=1$, then node sets
its initial value $x_{i,m}$ to $m$; else, the node sets its
initial value $x_{i,m}$ to $0$. \green{The algorithm computes} the
function family $(f_n)$ which maps $X_m^n$ to the element of $Y_m$
containing $(1/n) \sum_{i=1}^n x_{i,m}$, \green{which is possible, by} Theorem
\ref{thm:average_computation}. We
will call this algorithm $Q_m$. Let $y_m$ be its \green{final output.}

The nodes run \green{the algorithms} $Q_m$ for every \green{positive integer value of $m$, in an interleaved manner. Namely, at each time step, a node runs one step of a particular algorithm $Q_m$, according to}
the following order:
\[ Q_1, \hspace{0.2cm} Q_1, Q_2, \hspace{0.2cm} Q_1, Q_2, Q_3, \hspace{0.2cm} Q_1, Q_2, Q_3,
Q_4, \hspace{0.2cm} Q_1, Q_2, \ldots \]

At each time $t$, let \red{$m_i(t)$} be the smallest $m$ \green{(if it exists)} such that \green{the output
\red{$y_{i,m}(t)$} of $Q_m$ at node $i$}  is a \green{singleton} (not an interval).
\green{We identify this singleton with the numerical value of its single element, and we}
set
$y_i(t) = y_{i, m_i(t)}(t)/m_i(t)$. If \green{$m_i(t)$ is undefined},  then $y_i(t)$ is set to some default value.

It follows from the definition of $Q_m$ and from Theorem
\ref{thm:average_computation} that there exists a time after which
the outputs $\red{y_{i,m}}$ of the algorithms $Q_1,\dots,Q_n$ \green{do not change, and are the}
same for every node, \green{denoted $y_m$.} Moreover, at least one \green{of these algorithms} has an integer
output $y_m$. Indeed observe that $Q_n$ computes $(1/n)
\sum_{i=1}^n n1_{x_i=1} = \sum_{i=1}^n 1_{x_i=1}$, which is clearly
an integer. \red{In particular, $m_i(t)$ is eventually well-defined and bounded above by $n$.} We conclude that there exists a time after which the
output of our \green{overall} algorithm is fixed, shared by all nodes, and
different from the default value.

We now argue that this value is indeed $p_1$. \red{Let $m^*$ be} the smallest $m$ for
which the eventual output of $Q_m$ is a single integer $y_m$. Note that
$y_{m^*}$  is the exact average  of the $x_{i,m^*}$, i.e. $y_{m^*} =
\frac{1}{n} \sum_{i=1}^n m^* 1_{x_i=1} = m^* p_1$. \red{For large $t$, we have $y_i(t) = y_{i, m^*}(t)/m^*
=p_1$.}

Finally, it remains to argue that the algorithm described here can
be implemented with a sequence of automata. All the above
algorithm does is run a copy of all the automata implementing
$Q_1, Q_2, \ldots$ with time-dependent transitions. This can be
accomplished with an automaton whose state space is \green{the countable set} $\mathcal{N}
\times \red{\cup_{m =1}^{\infty}\prod_{i=1}^{m} {\cal Q}_i}$, where ${\cal Q}_i$ is the state \green{space} of $Q_i$,
and the set $\mathcal{N}$ of integers is used to keep track of
time.
\end{proof}

To illustrate the results of this section, let us consider again
some examples.

\texitem{(a)} \green{Majority testing between two options is} equivalent
to checking whether $p_1 \leq 1/2$, with alphabet $\{0,1\}$, and is
therefore computable.
\texitem{(b)}
Majority testing when some nodes can ``abstain'' amounts to checking whether $p_1 -
p_2 \geq 0$, with alphabet $\{0,1,{\rm abstain}\}$. This function family is
computable.
\texitem{(c)}
We can ask for the second
most popular value out of four, for example. In this case, the
sets $A_i$ can be decomposed into constituent sets defined by
inequalities such as $p_2 \leq p_3 \leq p_4 \leq p_1$, each of which
obviously has rational coefficients.
\texitem{(d)}
\green{For any subsets $I,I'$ of
$\{1,\ldots,K\}$, the indicator function of the set
where} $\sum_{i \in I} p_i
\geq \sum_{i \in I'} p_i$  is computable. This is equivalent to checking
whether more nodes have a value in $I$ than do in $I'$.
\texitem{(e)}
The \green{indicator functions of the sets defined by} $p_1^2 \leq 1/2$ and
$p_1 \leq \pi/4$ are measurable, so they are approximable. We are
unable to say whether they are computable.
\texitem{(f)}
The \green{indicator function of the set defined by} $p_1 p_2 \leq 1/8$ is approximable, but we are unable
to say whether it is computable.
%\texitem{(g)}
%The standard deviation of the initial conditions depends
%continuously on the proportions of each initial condition and is
%therefore approximable, in the sense of Corollary
%\ref{cor:approx_continuous}. We do not know if it is computable.
%QUESTIONABLE

%\input{maxtracking}

%%%%%%%%%%%%%%%%%%%%%%%%%%%%%%%%%%%%%%%%%%%%%%%%%%%%%%%%%%%%%
%%%%%%%%%%%%%%%%%%%%%%%%%%%%%%%%%%%%%%%%%%%%%%%%%%%%%%%%%%%%%
\section{A sketch of the proof of Theorem \ref{thm:average_computation}}\label{sec:comput_average}
%%%%%%%%%%%%%%%%%%%%%%%%%%%%%%%%%%%%%%%%%%%%%%%%%%%%%%%%%%%%%

\green{In this section, we sketch} an algorithm for computing the average of integer
initial values, but omit the proof of correctness. We start with an important subroutine \green{that tracks the maximum (over all nodes) of
 time-varying inputs at each node.}

\subsection{Distributed maximum tracking \label{maxtrack}}

Suppose that each node $i$ has \green{a time-varying input} $u_i(t)$ stored in
memory at time $t$, \green{belonging to a finite} set of numbers ${\cal U}$. \green{We assume that, for each $i$, the sequence
$u_i(t)$} must eventually stop changing, i.e., \green{that there exists}
some $T'$ \green{such that}
\[ u_i(t) = u_i(T'), ~~~~~~~~\mbox{ for all  } t \geq T'.\]
(However, the nodes need not be ever aware that $u_i(t)$ has reached its final value.)
Our goal is to develop a distributed algorithm \green{whose output eventually settles on the value $\max_i u_i(T')$. More precisely, each} node $i$ is to maintain a number $M_i(t)$ which must
satisfy the following constraint: \red{for every connected graph} \green{and any allowed sequences $u_i(t)$,} there exists some $T''$ with
\[ M_i(t) = \max_{i=1,\ldots,n} u_i(t), ~~~~~\mbox{  for all  } t \geq
T''. \]

Moreover, node $i$ must also maintain a pointer $P_i(t)$ to
a neighbor or to itself. We will use the notation
$P^2_i(t)=P_{P_i(t)}(t)$, $P^3_i(t)=P_{P_i^2(t)}(t)$, etc.
\green{We require}
the following \green{additional} property, for all $t$ larger than $T''$: for
each node $i$ there exists a node $\green{j}$ and a power $K$
such that for all $k \geq K$ we have $P^k_i =  j$; moreover,
$M_i(t) = u_j(t)$. In other words, by successively following the
pointers $P_i(t)$, one \green{can} arrive at a node with the maximum
value.

\begin{theorem} An algorithm satisfying the above conditions exists and
can be implemented at each node with a finite automaton \green{whose} state \blue{can be stored using} at most $C(\log
|{\cal U}|+d(i))$ bits, for some absolute constant $C$.
\end{theorem}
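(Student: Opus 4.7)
The plan is to implement a distributed max-flooding scheme that is robust to non-monotone changes of the local inputs. Each node $i$ stores a pair $(M_i, P_i)$, where $M_i \in \mathcal{U}$ is $i$'s current estimate of the global maximum and $P_i$ is either the symbol ``self'' or a port number in $\{1, \dots, d(i)\}$ identifying a distinguished neighbor. Encoding $P_i$ as a port number uses $\lceil \log(d(i)+1) \rceil$ bits, so the total storage per node is $\lceil \log|\mathcal{U}|\rceil + \lceil \log(d(i)+1)\rceil = O(\log|\mathcal{U}| + d(i))$, as required. At each round, $i$ transmits $M_i$ to each neighbor $j$, together with a single bit $b_{ij}$ that equals $1$ if and only if $P_i$ is the port leading to $j$; this adds only $O(1)$ bits per outgoing edge to each message.

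The update rule at node $i$ forms a list of candidate pairs: the self-candidate $(u_i(t), \text{self})$, and, for each neighbor $j$ with $b_{ji}=0$ (so $j$'s pointer does not point back at $i$), the candidate $(M_j(t), \text{port to } j)$. The new $(M_i(t+1), P_i(t+1))$ is the candidate with the largest first coordinate, with ties broken deterministically by port number. Once inputs are constant at $u^* = u(T')$ with global max $M^* = \max_i u_i^*$, the first step of the correctness proof is the standard flooding argument: induction on the graph distance from a maximizing node $j^*$ shows that $M_i$ becomes $M^*$ at every node, and that following $P_i$ leads back toward $j^*$. At this point, the pair $(M_i, P_i)$ is a fixed point of the update rule, so the configuration freezes at the correct values.

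The main obstacle is the classical count-to-infinity problem: when the input at some current ``source'' node decreases, a stale large value may in principle circulate along a transient pointer cycle and never die out. The bit $b_{ij}$ immediately kills 2-cycles, but longer cycles require an additional mechanism; I would augment the update with a consistency check in which $i$ refuses any candidate from a neighbor $j$ whose reported $M_j$ is now strictly smaller than $M_i$'s claimed value along the pointer, forcing $i$ to drop back to its own input or to a genuinely supported neighbor value. The convergence argument then rests on a lexicographic potential function: list the multiset $\{M_i\}$ in decreasing order and append the total pointer-depth of the induced forest. After $T'$, any configuration in which some $M_i$ strictly exceeds what any consistent pointer chain can justify triggers a reset that strictly lowers this potential; since both components take values in finite sets once inputs are frozen, the potential can change only finitely often. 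This yields both the eventual stability of $M_i$ at $M^*$ and the required property that iterating $P_i$ from any node reaches, in finitely many hops, a node $j$ with $u_j = M^*$.
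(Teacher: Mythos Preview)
Your approach is genuinely different from the paper's. The paper does not try to make max-flooding self-correcting; instead, whenever a node's input $u_i$ changes, that node initiates an explicit \emph{reset} message that is flooded through the network and forces every node to restart the maximum computation from its current $u_i$. The technical content of the paper's proof is in guaranteeing that these reset waves themselves terminate and do not circulate forever. Your scheme, by contrast, keeps the flooding rule running and relies on split-horizon (the $b_{ij}$ bit) plus a local consistency check to kill stale values.

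As written, this does not close the gap. Take a triangle $1$--$2$--$3$ in which an earlier phase with $u_1=5$ has flooded, so $M_1=M_2=M_3=5$, and then $u_1$ drops to $0$ (all $u_i=0$ thereafter). Tracing your rules, the stale value $5$ walks around the triangle indefinitely: at each step exactly one node holds $5$, its pointer aims at the \emph{previous} node in the cycle, so the $b$-bit toward the \emph{next} node is cleared and the $5$ is passed on; your consistency check then fires at the current holder (its pointer target now reports $0<5$), but only \emph{after} the $5$ has already been forwarded. The configuration is periodic with period $3$, the sorted multiset $\{M_i\}$ is $(5,0,0)$ at every step, and your lexicographic potential never decreases. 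More generally, the essence of count-to-infinity is precisely that a stale value can be supported by a pointer cycle in which every node's parent currently reports the same stale value; a rule that reacts only to a strict decrease along the pointer detects the problem one hop too late. Standard fixes (path vectors, hop counters bounded by $n$) violate the $C(\log|\mathcal{U}|+d(i))$ memory budget, and DUAL-style feasibility conditions require more state and a diffusing-computation handshake that you have not specified. The paper's reset-flood sidesteps all of this by making the input-change event itself the trigger, at the cost of a separate argument that resets do not cycle.
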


We briefly summarize the algorithm guaranteed by this theorem. Each node \green{$i$} initially
sets $M_i(0)=u_i(0), P_i(0)=i$. Nodes exchange their values \green{$u_i(t)$} and forward the largest \green{value}
 they have seen; every node sets its \green{estimated maximum} $M_i\green{(t)}$ equal to \green{that largest} value, and sets its pointer $P_i$ to the node
 \green{that forwarded that value to $i$.}  When some $u_i$ changes,
 the corresponding node sends out a reset message, which is \green{then} forwarded by all \green{other} nodes.
\green{The details of the algorithm and its analysis are somewhat involved because we need to make sure that}  the reset  messages do not cycle forever.

\subsection{The averaging algorithm}
We \green{continue} with an intuitive description of \green{the averaging} algorithm.
Imagine the initial integer values $x_i$ \green{as represented by $x_i$} pebbles.
Our algorithm attempts to exchange pebbles between nodes with
unequal number of pebbles so that the overall distribution \green{becomes} more
even. Eventually, either all nodes will have the same number of
pebbles, or some will have a certain number and others just one
more. \red{We let $u_i(t)$ be the current number of pebbles at node $i$; in particular, $u_i(0)=x_i$. An important property of the algorithm is that the total number of pebbles is conserved.}

To match nodes with unequal number of pebbles we use the
maximum tracking algorithm of Section \ref{maxtrack}. Recall
that the algorithm provides nodes with pointers which attempt to
track the \green{location of the} maximal values. When a node with $\red{u}_i$ pebbles comes to
believe in this way that a node with at least $u_i+2$ pebbles
exists, it sends a request in the direction \green{of the latter node} to obtain one or more
pebbles. This request follows \blue{a} path to \blue{a} node with \blue{a} maximal
number of pebbles, until \blue{the request} either gets denied, or gets accepted by a
node with at least $u_i+2$ pebbles.

More formally, the algorithm uses two types of messages:
\begin{enumerate}
\item[(a)] \green{(Request, $r$): This is a request for a
transfer of value. Here, $r$ is an integer that} represents the
\green{current number of pebbles at the}
emitter.

\item[(b)]
\green{(Accept, $w$): This corresponds to} acceptance of
a request, and subsequent transfer of  $w$ \green{pebbles to the requesting node. A request with a value $w=0$} represents a request denial.
\end{enumerate}
\blue{As} part of the algorithm, the nodes run the maximum tracking
algorithm of Section \ref{maxtrack}, as well as a minimum tracking counterpart. In particular, \green{each node $i$ has}
access to the variables $M_i(t)$ and $P_i(t)$ of the maximum
tracking algorithm (recall that these are, respectively, the
\green{estimated maximum and a} pointer to a neighbor). \green{Furthermore,} each node
maintains three additional variables.

\begin{itemize}
\item[(a)]
``mode"$\in$ \{free,blocked\}. Initially, the mode of every
node is free. Nodes become blocked when they are handling
requests.
\item[(b)]
``$Rin_i(t)$'', ``$Rout_i(t)$" are pointers to a
neighbor of $i$ or to itself. They represent, respectively, the
node from which $i$ has received a request, and \green{the node} to which \green{$i$} has
transmitted \green{a request.} Initially,
$Rin_i(0)=Rout_i(0)=\emptyset.$   %
\end{itemize}
The algorithm is described in Figure
\ref{fig:descr_avg_algo}. %We use $w(m)$ to denote the value sent
%with message $m$.

\begin{figure}
\centering

\includegraphics[scale = .38]{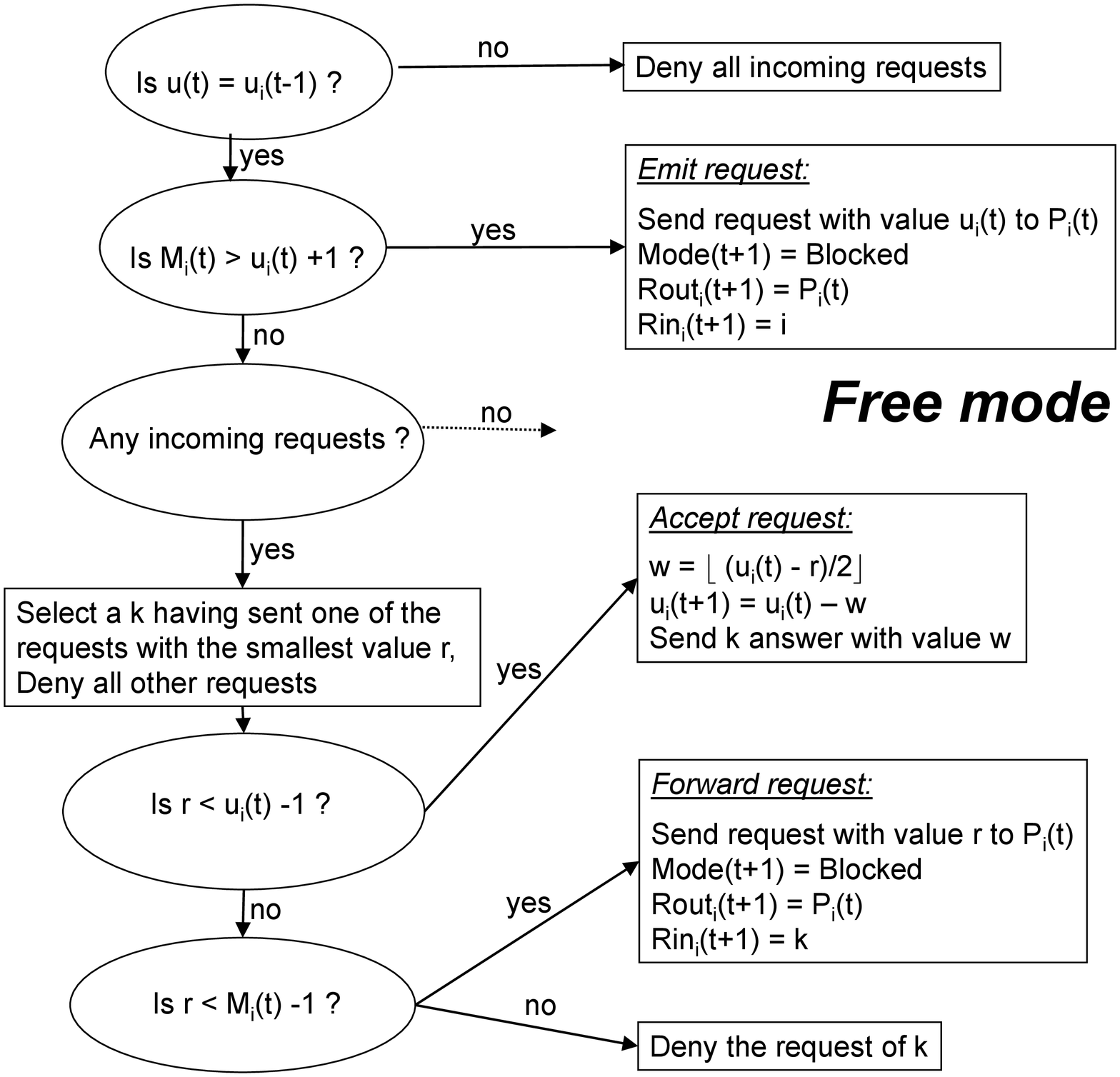}\\
\includegraphics[scale = .38]{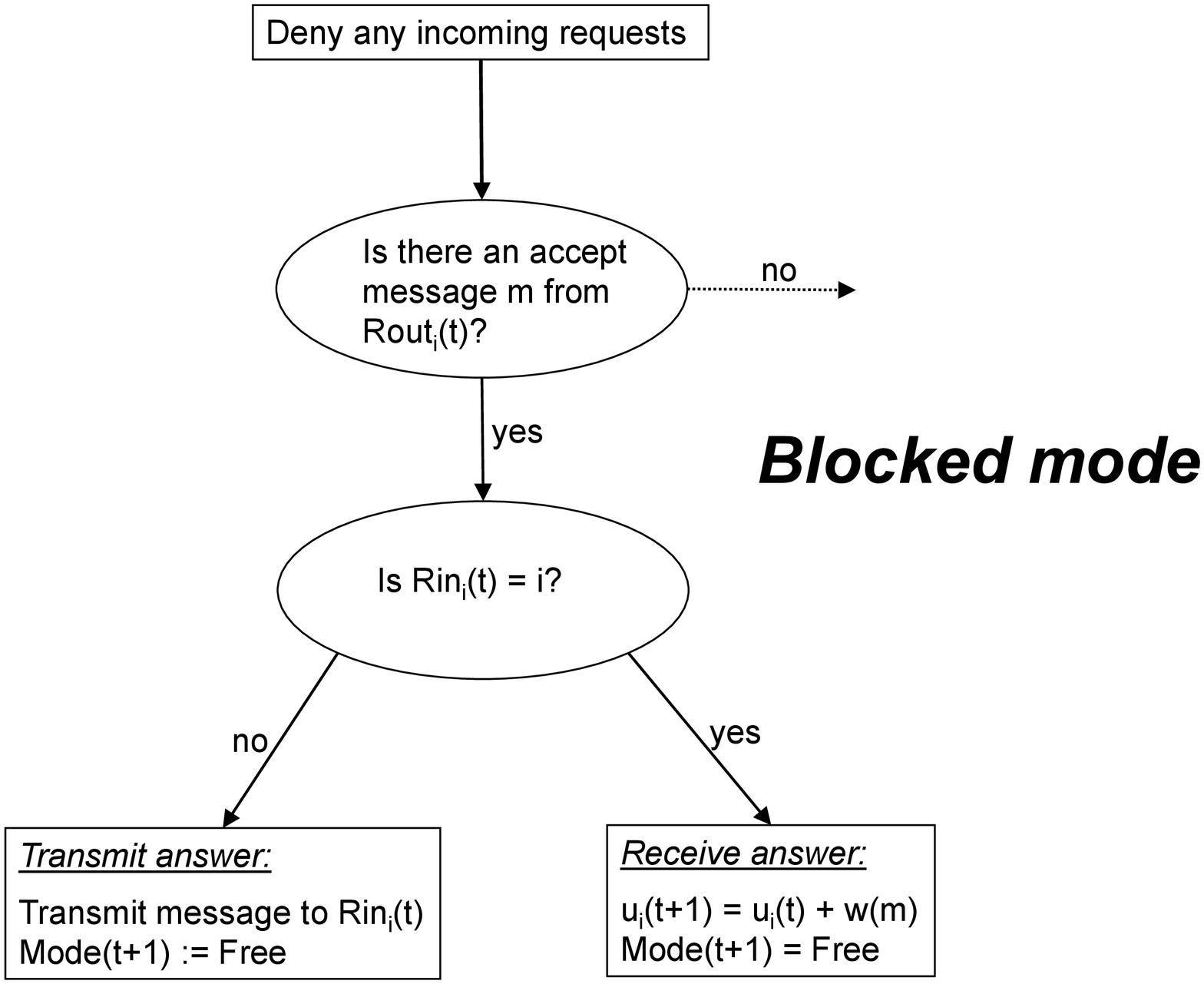}

\caption{Representation of the average computation
algorithm}\label{fig:descr_avg_algo}
\end{figure}

A key step in the proof of Theorem
\ref{thm:average_computation} is the following proposition, whose proof is omitted.

\begin{proposition}\label{pr:last} There is a time $t'$ such that $\red{u}_i(t) =
\red{u}_i(t')$, for all \green{$i$ and} $t \geq t'$. Moreover, \begin{eqnarray*}
\sum_i u_i(t') & = & \sum_i u_i(0)\  \green{=\ \sum_i x_i,} \\
|u_i(\red{t'}) - u_j(\red{t'})| & \leq & 1, \mbox{ for all } i,j.
\end{eqnarray*} \end{proposition}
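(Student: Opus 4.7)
My plan is to establish the proposition in three stages: (i) conservation of the total pebble count, (ii) termination (the $u_i$ eventually stop changing), and (iii) the near-equality bound $|u_i(t')-u_j(t')|\le 1$ in the final configuration. Stage (i) is essentially built into the message semantics: every (Accept, $w$) message simultaneously subtracts $w$ from the accepting node's count and adds $w$ to the requester's count, while no other local rule alters $u_i$. A simple induction on $t$ then gives $\sum_i u_i(t)=\sum_i u_i(0)=\sum_i x_i$ for all $t$.

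For stage (ii), I would introduce the potential $\Phi(t)=\sum_i u_i(t)^2$. Since the only events that change any $u_i$ are completed pebble transfers, and such a transfer moves $w\ge 1$ pebbles from a node with value $b$ to one with value $a$ where, by the acceptance rule, $b\ge a+2$, the change in potential is
\[
\Delta\Phi=(b-w)^2+(a+w)^2-b^2-a^2=-2w(b-a)+2w^2=-2w\bigl((b-a)-w\bigr).
\]
As long as the algorithm chooses $w$ so that $1\le w\le b-a-1$ (which is what the description implies, since a transfer that would invert or equalize the order would defeat the purpose), we get $\Delta\Phi\le -2$. Because $\Phi$ is a nonnegative integer bounded above by $(\sum_i x_i)^2$, only finitely many transfers can occur, so there is a time $t'$ after which no $u_i$ changes. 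Combined with (i) this yields the first two assertions of the proposition.

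For stage (iii), I would argue by contradiction: suppose there exist $i,j$ with $u_j(t')\ge u_i(t')+2$. Since no $u_k$ changes after $t'$, the inputs fed into the maximum-tracking subroutine of Section~\ref{maxtrack} are eventually constant, so by the theorem in that section there is a later time $T''\ge t'$ after which $M_i(t)=\max_k u_k(t')\ge u_i(t')+2$ for every node $i$ whose value is not already the maximum, and the chain of pointers $P_i,P_i^2,\ldots$ leads to an actual maximizer. Thus $i$ satisfies the condition for emitting a (Request, $u_i(t')$) message toward $P_i$. Following the pointer chain, this request propagates to a node whose pebble count is at least $u_i(t')+2$; at that point the acceptance rule forces an (Accept, $w$) with $w\ge 1$, contradicting the supposed absence of further transfers. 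Hence $|u_i(t')-u_j(t')|\le 1$ for all $i,j$.

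The main obstacle I anticipate is handling the interaction between the ``blocked'' flag and the $Rin/Rout$ bookkeeping: an outstanding request may be denied (answered with $w=0$) whenever some intermediate node is already blocked, and such denials do not decrease $\Phi$. I would therefore need a supporting lemma stating that requests and their responses cannot be generated indefinitely without at least one of them eventually completing as a genuine transfer. The natural way to get this is to show that each node can have at most one outstanding request at a time (tracked by $Rout_i$), that every request travels along a finite pointer path terminating at a node $k$ with $u_k=M_i$, and that once $\Phi$ has stopped changing the pointer path itself becomes stable (by the max-tracking guarantee); then along that stable path any request either is accepted at a strict maximizer (giving a contradiction as above) or triggers a denial cascade that unblocks all intermediate nodes in bounded time, after which a fresh request must succeed. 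Making this book-keeping precise is where the real work lies.
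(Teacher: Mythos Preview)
The paper explicitly omits the proof of this proposition (``whose proof is omitted''), so there is nothing to compare your proposal against line by line. That said, your three-stage plan---conservation, an $\ell^2$ potential to bound the number of transfers, and a contradiction via the max-tracking guarantee---is the standard route for quantized-averaging results of this type and is almost certainly what the authors had in mind; the paper even flags conservation as ``an important property of the algorithm'' and states that the full argument is ``fairly involved and too long to be included.''

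Two comments on the details. First, your conservation step asserts that an (Accept,~$w$) \emph{simultaneously} decrements the acceptor and increments the requester; in a message-passing model there is typically at least one round during which the $w$ pebbles are in flight, so the invariant you really want is $\sum_i u_i(t)+(\text{pebbles carried by in-transit Accept messages})=\sum_i x_i$. This is an easy fix but worth stating correctly. Second, your worry that ``denials do not decrease $\Phi$'' is misplaced as an obstacle to stage~(ii): the proposition only asks that the $u_i$ stop changing, and since $u_i$ changes only upon a genuine transfer, the potential bound already gives a last transfer time $t'$ with $u_i(t)=u_i(t')$ for $t\ge t'$, regardless of how many request/denial messages continue to circulate. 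The place where the request/denial bookkeeping actually matters is stage~(iii): you must show that if a gap of at least $2$ persists after $t'$, then \emph{some} request is eventually accepted, contradicting the finality of $t'$. Your sketch for this (after $t'$ the $u_i$ are constant, so the max-tracking theorem yields stable pointers; along a stable pointer path a request either reaches a strict maximizer and is accepted, or a denial cascade unblocks the path in bounded time so that a subsequent request goes through) is the right shape, but beware a potential circularity you did not flag: multiple minimizing nodes may issue requests whose paths share edges, so ``a fresh request must succeed'' needs an argument that blocked intermediate nodes cannot be perpetually tied up by \emph{other} nodes' requests that are themselves being denied. One clean way out is to argue about the request whose recorded value $r$ is minimal and whose pointer path is shortest among those currently in the system; that request cannot be preempted indefinitely. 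This is exactly the ``real work'' you anticipated, and it is where the omitted proof presumably spends most of its effort.
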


We now conclude our sketch of the proof of Theorem \ref{thm:average_computation}.
\red{Let $u_i^*$ be the value that $u_i(t)$ settles on.}
It follows from \green{Proposition \ref{pr:last}} that if the average $\bar x$ of
the \green{inputs $x_i$} is integer, then \green{$u_i(t)=u_i^*=\bar x$} will
eventually hold for every $i$. If \green{$\bar x$} is not an integer, then some
node will eventually have $\red{u_i^*}=  \lfloor \bar x \rfloor$
and others $u_i^* =\lceil \bar x \rceil$. Using
the maximum and minimum computation algorithm, nodes will
eventually have a correct estimate of \red{$\max_i u_i^*$ and $\min_i
u_i^*$,} because \blue{each} $u_i(t)$ \green{converges to $u_i^*$.} This allows \green{the nodes} to
determine if the average is exactly \red{$\bar x$} (integer average),
or if it lies in \green{$(\bar u_i^*,\bar u^*_i+1)$, or $(\bar u^*_i-1,\bar
u^*_i)$, which is the property asserted by}  Theorem
\ref{thm:average_computation}.

\begin{small}

\end{small} 
%\input{appen_formal_proof}

%%%%%%%%%%%%%%%%%%%%%%%%%%%%%%%%%%%%%%%%%%%%%%%%%%%%%%%
%%%%%%%%%%%%%%%%%%%%%%%%%%%%%%%%%%%%%%%%%%%%%%%%%%%%%%%
%%%%%%%%%%%%%%%%%%%%%%%%%%%%%%%%%%%%%%%%%%%%%%%%%%%%%%%
%%%%%%%%%%%%%%%%%%%%%%%%%%%%%%%%%%%%%%%%%%%%%%%%%%%%%%%


\begin{thebibliography}{99}

\bibitem{A80} ``Local and global properties in networks of
processors,'' D. Angluin, {\em Proceedings of the twelfth annual
ACM symposium on Theory of computing}, 1980.

\bibitem{AR07} ``An introduction to population protocols,''
 J. Aspnes and E. Ruppert, {\em Bulletin of the European Association for Theoretical Computer Science,} 93:98–117, October 2007.


\bibitem{ACR07} ``Distributed Average Consensus using Probabilistic
Quantization,''  T.C. Aysal, M. Coates, M. Rabbat,  {\it 14th
IEEE/SP Workshop on Statistical Signal Processing,} 2007.


\bibitem{AM94} Y. Afek and Y. Matias, ``Elections in anonymous networks,''
{\em Information and Computation,} 113:2, pp. 113-330, 1994.

\bibitem{ASD08}  O. Ayaso, D. Shah, M. Dahleh, ``Counting bits for distributed function
computation,'' {\em Proceedings of
the IEEE International Symposium on Information Theory,} 2008.

\bibitem{ASW88}  H. Attiya, M. Snir, M.K. Warmuth, ``Computing on an anonymous
ring,''  {\it Journal of the
ACM,} Vol. 35, No. 4, pp. 845 - 875, 1988.

\bibitem{BTV08} F. B\'en\'ezit, P. Thiran, M. Vetterli``Interval consensus: from quantized gossip to
voting,'' , {\em Proceedings
of ICASSP 08}.

\bibitem{CB08} R. Carli, F. Bullo, ``Quantized Coordination Algorithms for Rendezvous and Deployment, ''
 preprint.

%\bibitem{CST96} ``Two-state, r  =  1 Cellular Automaton that Classifies
%Density,'' M. S. Capcarrere, M.  Sipper, and M.  Tomassini, {\em
%Phys. Rev. Lett.,} 77, 4969 - 4971 (1996).

%\bibitem{F97} ``Solution of the density classification problem with two cellular automata
%rules,'' H. Fuks, {\em Phys. Rev. E,} 55, R2081 - R2084 (1997).

\bibitem{FCFZ08} P. Frasca, R. Carli, F. Fagnani, S. Zampieri, ``Average consensus on networks with quantized
communication,''
 preprint.

\bibitem{FR03} F. Fich, E. Ruppert,  ``Hundreds of impossibility results for distributed
computing,'' {\em Distributed Computing,}
Vol. 16, pp. 121-163, 2003.


\bibitem{GK05} A. Giridhar, P.R. Kumar, ``Computing and communicating functions over sensor networks,''
 {\it IEEE Journal on on Selected Areas
in Communications,} Vol. 23, No. 4, pp.  755- 764, 2005.


\bibitem{GKL78} P. Gacs, G.L. Kurdyumov, L.A. Levin,  ``One-dimensional uniform arrays that wash out finite
islands,'' {\em  Problemy
Peredachi Informatsii,} 1978.

\bibitem{GFL83}  A.G.
Greenberg, P. Flajolet, R. Lander, ``Estimating the multiplicity of conflicts
to speed their resolution in multiple access channels,'' {\ Journal of the ACM}, Vol 34,
No. 2, 1987.

%\bibitem{GKLMM07} `` Peer counting and sampling in overlay networks based
%on random walks,''  A. J. Ganesh, A. -M. Kermarrec, E. Le Merrer, L.
%Massoulie, {\it Distributed Computing,} Vol. 20, No. 4, pp. 267-278,
%2007


\bibitem{HP01} Y. Hassin and D. Peleg, ``Distributed Probabilistic Polling and
Applications to Proportionate Agreement,''
{\em Information and Computation} 171, (2001), 248-268.

\bibitem{JLM03} A. Jadbabaie, J. Lin, A.S. Morse,  `` Coordination of groups of mobile autonomous agents using nearest neighbor rules,''
{\em IEEE Transactions on
Automatic Control,} Vol. 48, No. 6, pp. 988-1001, 2003.

\bibitem{KBS07}  A. Kashyap, T. Basar, R.
Srikant, ``Quantized consensus,'' {\it Automatica,} Vol. 43, No. 7, pp. 1192-1203, 2007.

\bibitem{KKB90} E.  Kranakis, D. Krizanc, J. van den Berg, ``Computing boolean functions on anonymous
networks,''  {\em
Proceedings of the 17th International Colloquium on  Automata,
Languages and Programming,}  Warwick University, England, July
16–20, 1990.

\bibitem{KKK05} N. Khude, A. Kumar, A. Karnik, ``Time and energy complexity of distributed computation in wireless sensor networks,''
  {\it Proceedings of INFOCOM 05}.

\bibitem{KLM08} N. Katenka, E. Levina, and G. Michailidis,  ``Local Vote
Decision Fusion for Target Detection in Wireless Sensor
Networks,'' {\it IEEE Transactions on Signal Processing,}
56(1):329-338, 2008.


\bibitem{KM08} S. Kar, J.M Moura, ``Distributed average consensus in sensor networks with
quantized inter-sensor communication, ''   {\it
IEEE International Conference on Acoustics, Speech and Signal
Processing,} 2008.

\bibitem{LB95} M. Land, R.K. Belew, ``No perfect two-state cellular automaton for density classification exists, ''
{\em Physical Review Letters,} vol 74, no.
25, pp. 5148-5150, Jun 1995.

\bibitem{YSG07} Y. Lei, R. Srikant, G.E. Dullerud,  ``Distributed Symmetric Function Computation in Noisy Wireless Sensor Networks,''
{\it IEEE Transactions on
Information Theory,} Vol. 53, No. 12, pp. 4826-4833, 2007.


\bibitem{LBRA04} L. Liss, Y. Birk, R. Wolff, and A. Schuster, ``A Local Algorithm for Ad Hoc Majority Voting Via Charge Fusion,''
 {\em Proceedings of
DISC'04}, Amsterdam, the Netherlands, October, 2004


\bibitem{MBCF07} S. Martinez, F. Bullo, J. Cortes, E.
Frazzoli, ``On synchronous robotic networks - part I: models,
tasks, complexity,''  {\it IEEE Transactions on Robotics and Automation,} Vol.
52, No. 12, pp. 2199 - 2213,  2007.

\bibitem{MK08} S. Mukherjeea, H.
Kargupta, ``Distributed probabilistic inferencing in sensor
networks using variational approximation,''  {\em Journal of Parallel and Distributed Computing,}
Volume 68, Issue 1, Pages 78-92, 2008.


\bibitem{MW93} S.
Moran, M.K. Warmuth,  ``Gap Theorems for Distributed Computation,''  {\em SIAM Journal of Computing} Vol. 22, No.
2, pp. 379-394 (1993).

\bibitem{NOOT07} A.
Nedic, A. Olshevsky, A. Ozdaglar, J.N. Tsitsiklis, ``On
distributed averaging algorithms and quantization effects,''   preprint, to be
published in {\it Proceedings of 47th IEEE Conference on Decision
and Control,} 2008.



\bibitem{P02} D. Peleg, ``Local majorities, coalitions and monopolies in graphs: a
review, ''  {\it Theoretical Computer Science,} Vol. 282,
No. 2, pp. 231-237, 2002.

\bibitem{PVV08}  E. Perron, D. Vasuvedan, M. Vojnovic, ``Using Three States for Binary Consensus on
Complete Graphs,'' preprint.


%\bibitem{C08} J. Corte's , `Distributed algorithms for reaching consensus
%on
%general
%functions,'' {\it Automatica,} 44 (3) (2008), 726-737.

%\bibitem{OSM04} R. Olfati-Saber and R. M. Murray, ``Consensus Problems in
%Networks
%of Agents with Switching Topology and Time-Delays," {\it IEEE
%Transactions on Automatic Control}, vol. 49(9), pp. 1520-1533, Sep.,
%2004.

%\bibitem{WZZW06} ``An Efficient Random Number Generator for Ad Hoc Sensor
%Network,'' Y.H.Wang,   H.G. Zhang, F.B. Zhu, B.J. Wang, {\em
%Proceedings of the International Conference on Wireless
%Communications, Networking and Mobile Computing,} 2006.



\bibitem{XB04} L. Xiao and S. Boyd, ``Fast Linear Iterations for
Distributed Averaging,'' {\it Systems and Control Letters,}
53:65-78, 2004.

\bibitem{YK88} M. Yamashita and T. Kameda, ``Computing on an anonymous network,''
 {\em Proceedings of the seventh annual
ACM Symposium on Principles of distributed computing,} 1988.



\end{thebibliography}
\end{document}